\begin{document}
\title{Rateless Lossy Compression via the Extremes}

\author{
\IEEEauthorblockN{Albert No and Tsachy Weissman\\}
\IEEEauthorblockA{Department of Electrical Engineering, Stanford University\\
Email: \texttt{\{albertno, tsachy\}@stanford.edu}}
\thanks{The material in this paper has been presented in part at the 2014 52nd Annual Allerton Conference on Communication, Control, and Computing (Allerton).  This work was supported by the NSF Center for Science of Information under Grant Agreement CCF-0939370.}
}

\maketitle
\begin{abstract}
We begin by presenting a simple  lossy compressor operating at near-zero rate: The encoder merely describes the indices of the few maximal source components, while the decoder's reconstruction is a natural estimate of the source components based on this information. This scheme turns out to be near-optimal for the memoryless Gaussian source in the sense of achieving the zero-rate slope of its distortion-rate function. Motivated by this finding, we then propose a scheme comprised of iterating the above lossy compressor on an appropriately transformed version of the difference between the source and its reconstruction from the previous iteration. The proposed scheme achieves the rate distortion function of the Gaussian memoryless source (under squared error distortion) when employed on any finite-variance ergodic source. It further possesses desirable properties we respectively refer to as infinitesimal successive refinability, ratelessness, and complete separability. Its storage and computation requirements are of order no more than  $\frac{n^2}{\log^{\beta} n}$ per source symbol for $\beta>0$ at both the encoder and decoder. Though the details of its derivation, construction, and analysis differ considerably, we discuss similarities between the proposed scheme and the recently introduced Sparse Regression Codes (SPARC) of Venkataramanan et al.         

\begin{IEEEkeywords}
Complete separability, extreme value theory, infinitesimal successive refinability, order statistics, rate distortion code, rateless code, spherical distribution, uniform random orthogonal matrix.
\end{IEEEkeywords}
\end{abstract}
\IEEEpeerreviewmaketitle

\section{Introduction}\label{sec:Introduction}

Consider an independent and identically distributed (i.i.d.) standard Gaussian source $X^n = (X_1, X_2, \ldots, X_n)$. It is well known \cite{gnedenko1943distribution} that the maximum value concentrates on $\sqrt{2\log n}$, i.e., $\max_{1\leq i\leq n} X_i \approx \sqrt{2\log n}$. This fact suggests a simple lossy source coding scheme for the Gaussian source under quadratic distortion.
The encoder sends the index of the maximum value and the decoder reconstructs $\hat{X}^n$ according to 
\begin{align}
\hat{X}_i = \begin{cases}  \sqrt{2\log n}&\mbox{ if $X_i$ is the maximum}\\ -\frac{\sqrt{2\log n}}{n-1}&\mbox{ otherwise.}\end{cases}
\end{align}
For the meager $\log n$ nats that it requires, this simple scheme achieves essentially optimum distortion (in a sense made concrete in Section \ref{sec:Optimum Zero-Rate Gaussian Source Coding Scheme}) and has obviously modest storage and computational requirements. We can generalize this scheme by describing the indices of the $k_n$ largest values, and the scheme still achieves optimum distortion for its operating rate. Note that this scheme can be considered a special case of a permutation code \cite{berger1972permutation}, where the encoder sends a rough ordering of the source. It can perform as well as the best entropy-constrained scalar quantizer (ECSQ) but cannot achieve the optimum distortion-rate function at general positive rates \cite{goyal2001optimal}. In \cite{berger1972permutation}, the authors mentioned the $k_n=1$ case explicitly as being asymptotically optimum under the expected distortion criterion. Our focus is more on the excess distortion probability than the expected distortion. Furthermore, we establish a more general result where $k_n$ grows sub-linearly in $n$.

We generalize this idea to a scheme we refer to as Coding with Random Orthogonal Matrices (CROM), which achieves the distortion-rate function at all rates. Let $\Ab$ be a random $n$ by $n$ matrix uniformly drawn from the set of all $n$ by $n$ orthogonal matrices, i.e., for any $n$-dimensional vector $Y^n$, the random vector $\Ab  Y^n$ is uniformly distributed on the sphere with radius $\norm{Y^n}$. Since a random vector uniformly distributed on a
high-dimensional sphere is close in distribution to an i.i.d.\ Gaussian random vector, we can expect the behavior of $\Ab (X^n-\hat{X}^n)$ to be similar to that of an i.i.d.\ Gaussian random vector. Therefore, we can apply the above scheme again to describe a lossy version of it, using another $\log n$ nats, and so on. In this paper, we show that this iterative scheme achieves the Gaussian rate distortion function for any finite-variance ergodic source under quadratic distortion, while enjoying additional properties such as a strong notion of successive refinability and polynomial complexity.

One nice property of CROM is ratelessness. Similar to the rateless codes in the channel coding setting, CROM is able to reconstruct a source with partial messages while the optimum distortion for that rate is achieved. More precisely, suppose the decoder received first fraction $\nu$ of the messages for some $0<\nu<1$, then it can reconstruct a source with a distortion $D_G(\nu R)$. Thanks to the ratelessness, the encoder does not have to determine the rate ahead of encoding. However, unlike in many rateless channel coding settings, CROM requires that the bits observed are the first fraction $\nu$ bits, rather than that number of bits gleaned from any set of locations along the stream.

Much work has been dedicated to reducing the complexity of rate-distortion codes (cf. \cite{gioran2012complexity}, \cite{gupta2008rate}, \cite{korada2010polar}  and references therein). In particular, Venkataramanan et al. proposed the sparse regression code (SPARC) that achieves the Gaussian distortion-rate function with low complexity \cite{venkataramanan2014lossy}, \cite{venkataramanan2014lossy-minimum}. SPARC and CROM have similarities, which we discuss in detail.

The paper is organized as follows. In Section \ref{sec:Optimum Zero-Rate Gaussian Source Coding Scheme}, we present the simple zero-rate scheme and the sense in which it is optimal for  Gaussian sources. CROM is described, along with some of its properties and performance guarantees, in Section \ref{sec:Main Results}. We compare our scheme with SPARC in Section \ref{sec:Comparison to SPARC}. We test CROM via simulation in Section \ref{sec:Simulation Results}. We also discuss dual channel coding results in Section \ref{sec:Channel Coding Dual}. Section \ref{sec:Proofs} provides proofs of our main results and we conclude the paper in Section \ref{sec:Conclusions}.

\emph{Notation}: Both $X^n$ and ${\bf X}$ denote an $n$-dimensional random vector $(X_1,X_2,\ldots,X_n)$. We let $X_{(i)}$ denote the $i$-th largest element of $X^n$. We denote an $n$ by $n$ random orthogonal matrix by $\Ab$, and a non-random orthogonal matrix by $A$. We denote the distortion rate-function of the memoryless standard Gaussian source by $D_G(R)$. Finally, we use nats instead of bits and $\log$ denotes logarithm to the natural base unless specified otherwise.

\section{Optimum Zero-Rate Gaussian Source Coding Scheme}\label{sec:Optimum Zero-Rate Gaussian Source Coding Scheme}
 In this section, we propose a simple zero-rate lossy compressor which is essentially optimal for the i.i.d.\ standard Gaussian source under quadratic distortion. Before that, let us be more rigorous regarding our notion of ``zero-rate optimum source coding" for a Gaussian source under squared error distortion. Consider a scheme using a number of nats for the lossy description of the source which  is sub-linear in the block length $n$, i.e., the rate $R_n$ of the scheme converges to zero. Suppose the scheme achieves a distortion $D_n(\epsilon)$, where the target excess distortion probability is $\epsilon$, i.e., 
\begin{align}
\Pr{\frac{1}{n}\norm{X^n-\hat{X}^n}^2 >D_n(\epsilon)}<\epsilon.
\end{align}
We further define $D(n,0,\epsilon)$ to be the minimum distortion achievable over all possible strictly zero-rate schemes when the target excess distortion probability is $\epsilon$. Following lemma shows that the best reconstruction is the all zero vector ${\bf 0} = (0,0,\ldots,0)$ for the i.i.d.\ standard Gaussian source under squared error distortion. 
\begin{lemma}\label{lem:allzero}
Let $X^n$ be the i.i.d.\ standard Gaussian source. Then, for any $x^n\in \fR^n$ and $D>0$, the following inequality holds.
\begin{align}
\Pr{\norm{X^n-x^n}^2>D} \geq \Pr{\norm{X^n}^2>D}.
\end{align}
\end{lemma}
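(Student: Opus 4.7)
The plan is to reduce the $n$-dimensional inequality to a scalar one by exploiting the rotational invariance of the i.i.d.\ standard Gaussian. For any orthogonal matrix $U$, $U X^n$ has the same distribution as $X^n$, hence $\norm{X^n - x^n}^2$ is equal in distribution to $\norm{X^n - U x^n}^2$. Choosing $U$ so that $U x^n = (c, 0, \ldots, 0)$ with $c = \norm{x^n}$, I may assume without loss of generality that $x^n$ lies along the first coordinate axis.

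Next I would condition on $S := \sum_{i=2}^n X_i^2$, which is independent of $X_1$. Under this conditioning, the event on the left becomes $\{(X_1 - c)^2 > D - S\}$ and the one on the right becomes $\{X_1^2 > D - S\}$; both have probability one when $D - S \leq 0$, so it suffices to establish the scalar inequality
\begin{align}
\Pr{(X_1 - c)^2 > t} \geq \Pr{X_1^2 > t} \quad \forall\, t \geq 0,~c \in \fR,
\end{align}
and then integrate back over $S$.

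Setting $a = \sqrt{t}$ and writing both sides in terms of the standard Gaussian CDF $\Phi$, this reduces to showing $g(c) \geq g(0)$, where $g(c) := \Phi(-c-a) + \Phi(c-a)$ and $g(0) = 2 \Phi(-a)$. Since $g$ is even in $c$, it suffices to treat $c \geq 0$, where
\begin{align}
g'(c) = \phi(c-a) - \phi(c+a) \geq 0,
\end{align}
because $\phi$ is decreasing in $|\cdot|$ and $(c+a)^2 - (c-a)^2 = 4ac \geq 0$. Therefore $g$ attains its minimum at $c = 0$, establishing the scalar inequality.

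The main obstacle is merely spotting the right reduction: once rotational invariance is used to move $x^n$ onto a coordinate axis and the remaining $n-1$ coordinates are conditioned away, the claim collapses to the elementary one-variable calculation above. An alternative route would be to invoke that $\norm{X^n - x^n}^2$ is noncentral $\chi^2$ with noncentrality parameter $\norm{x^n}^2$ and that this family is stochastically increasing in that parameter (for instance, via the Poisson mixture representation of noncentral $\chi^2$), but the direct argument keeps the proof self-contained.
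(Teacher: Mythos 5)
Your proof is correct and follows essentially the same route as the paper's: reduce by rotational invariance to $x^n = (c,0,\ldots,0)$, condition on the remaining coordinates, and invoke the scalar inequality $\Pr{(X_1-c)^2 > t} \geq \Pr{X_1^2 > t}$. The only difference is that the paper asserts this scalar step without justification, while you supply the (short and correct) calculus argument showing $g(c) = \Phi(-c-a) + \Phi(c-a)$ is minimized at $c=0$.
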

\begin{proof}
Since $X^n$ has spherically symmetric distribution, namely $AX^n$ is also i.i.d.\ standard Gaussian for any orthogonal matrix $A$, $\Pr{\norm{X^n-x^n}^2>D}$ only depends on $\norm{x^n}$. Let $\norm{x^n} = a$, then
\begin{align}
\Pr{\norm{X^n-x^n}^2>D} =& \Pr{(X_1-a)^2+\sum_{i=2}^n X_i^2>D}\\
=&\E{\Pr{(X_1-a)^2 > D- \sum_{i=2}^n X_i^2}\suchthat X_2,X_3,\ldots,X_n}\\
\geq&\E{\Pr{X_1^2 > D- \sum_{i=2}^n X_i^2}\suchthat  X_2,X_3,\ldots,X_n}\\
=& \Pr{\norm{X^n}^2>D}.
\end{align}
\end{proof}
Therefore, 
\begin{align}
D(n,0,\epsilon) \stackrel{\Delta}{=} \inf\left\{D:\Pr{\frac{1}{n}\norm{X^n}^2 >D}<\epsilon.\right\}.
\end{align}
It is not hard to show that
\begin{align}
D(n,0,\epsilon) = 1 + \sqrt{\frac{2}{n}} Q^{-1}(\epsilon) + O\left(\frac{1}{n}\right).
\end{align} 

Finally, we say that a sequence of zero-rate schemes achieves the zero-rate optimum if
\begin{align}
\lim_{n\rightarrow\infty} \frac{D_n(\epsilon)-D(n,0,\epsilon)}{R_n} = D_G'(0)
\end{align}
for all $\epsilon>0$, where $D_G'(0) = -2$ is the slope of the Gaussian distortion-rate function at zero rate. Equivalently,
\begin{align}
D_n(\epsilon) = 1  - 2 R_n+ \sqrt{\frac{2}{n}} Q^{-1}(\epsilon) + o\left(R_n\right).
\end{align}
This definition is reminiscent of the finite block length result in lossy compression \cite{DBLP:journals/corr/abs-1109-6310}, \cite{kostina2012fixed}, where the authors showed the minimum distortion $D(n,R,\epsilon)$ among all possible schemes for given rate $R$, target excess distortion probability $\epsilon$, and block length $n$ is
\begin{align}
D(n,R,\epsilon) = D_G(R) + \sqrt{\frac{2}{n}} Q^{-1}(\epsilon) + O\left(\frac{\log n}{n}\right).
\end{align}
Recall that $D_G(R)$ denotes the Gaussian distortion-rate function of memoryless standard Gaussian source.

We are now ready to propose the simple zero-rate optimum source coding scheme. Let $X^n = (X_1,X_2,\ldots, X_n)$ be an i.i.d.\ standard normal random process. The encoder simply sends the index of the maximum value, $m = \arg\max_{1\leq i \leq n} X_i$, and the decoder reconstructs $\hat{X}^n$ as 
\begin{align}
\hat{X}_i = \begin{cases}  \alpha_n&\mbox{ if $i=m$}\\ -\frac{\alpha_n}{n-1}&\mbox{ otherwise,}\end{cases}
\end{align}
where $\alpha_n>0$ is naturally chosen as $\E{X_{(1)}}\approx \sqrt{2\log n}$. Note that the encoder only describes the index of the maximum entry but not its value. This scheme works because the unsent value of the maximum entry concentrates on the specific value near $\sqrt{2\log n}$, i.e., $\max_{1\leq i\leq n} X_i \approx \sqrt{2\log n}$, which is
a well-known fact from extreme value theory \cite{gnedenko1943distribution}.

The rate of this scheme is $R_n = \frac{\log n}{n}$ nats per symbol, and it is not hard to show that the distortion is reduced by $2\frac{\log n}{n}$ (plus lower order terms), which is twice the rate we are using. Therefore, it is natural to suspect that such a scheme is zero-rate optimum.

We can generalize this scheme to send more than one index: The encoder sends the indices of the $k_n$ largest values of $X^n$, and the decoder reconstructs $\hat{X}^n$ as 
\begin{align}
\hat{X}_i = \begin{cases}  \alpha_n&\mbox{ if $X_i$ is one of the $k_n$ largest values of $X^n$}\\  -\frac{k_n\alpha_n}{n-k_n}&\mbox{ otherwise.}\end{cases}
\end{align}
Here we will choose $k_n = \lceil\log^{\beta} n\rceil$ for some $\beta>0$ and $\alpha_n$ to be roughly the expected value of the $k_n$-th largest value of $X^n$, i.e., $\alpha_n \approx \E{X_{(k_n)}}$. 

Clearly this scheme has rate $R_n = \frac{1}{n}\log {n\choose k_n}$ where $\lim_{n\rightarrow \infty} R_n = 0$. The following theorem shows that this scheme is optimal at zero rate.
\begin{theorem}\label{thm:optimum zero rate}
For any $\beta \geq 0$ and $k_n = \lceil\log^{\beta} n\rceil$, there is an $\alpha_n>0$ such that the above scheme achieves the zero-rate optimum. More precisely, for any $\epsilon>0$, the scheme achieves
\begin{align}
\Pr{\frac{1}{n}\norm{X^n-\hat{X}^n}^2>D_n}\leq \epsilon,
\end{align}
where 
\begin{align}
D_n = 1-2R_n + \sqrt{\frac{2}{n}} Q^{-1}(\epsilon) + O\left(\frac{k_n\log\log n}{n}\right).
\end{align}
Since $R_n = O\left(\frac{k_n\log n}{n}\right)$, we can say that the above scheme is zero-rate optimum.
\end{theorem}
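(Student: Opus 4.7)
My first step is purely algebraic: expanding $\|X^n-\hat X^n\|^2$ using the reconstruction formula, writing $S$ for the random set of indices of the $k_n$ largest components and collecting terms, yields
\begin{align}
\frac{1}{n}\|X^n-\hat X^n\|^2
= \frac{1}{n}\sum_{i=1}^n X_i^2
\;-\; \frac{2\alpha_n}{n-k_n}\!\!\sum_{j=1}^{k_n}X_{(j)}
\;+\; \frac{2k_n\alpha_n}{n(n-k_n)}\sum_{i=1}^n X_i
\;+\; \frac{k_n\alpha_n^2}{n-k_n}.
\end{align}
This decomposition is the workhorse of the proof: the first term will produce the $1+\sqrt{2/n}\,Q^{-1}(\epsilon)$, the second together with the last will produce the $-2R_n$ coding gain, and the third is negligible.

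Next, for the first term, I would apply the Berry--Esseen theorem to the i.i.d.\ sum $\sum X_i^2$ (whose summands have mean $1$ and variance $2$) to obtain $\frac{1}{n}\sum_i X_i^2 \leq 1+\sqrt{2/n}\,Q^{-1}(\epsilon)+O(1/n)$ with probability at least $1-\epsilon+o(1)$. The third term is a standard Gaussian with variance $n$ scaled by $\frac{2k_n\alpha_n}{n(n-k_n)}$, so it is $O(k_n\sqrt{\log n}/n^{3/2})$ in probability, which is absorbed in the $O(k_n\log\log n/n)$ error since $\sqrt{\log n}/\sqrt{n}\ll \log\log n$.

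The core of the argument is to show that the combination
\begin{align}
\frac{2\alpha_n}{n-k_n}\sum_{j=1}^{k_n}X_{(j)} \;-\; \frac{k_n\alpha_n^2}{n-k_n} \;=\; 2R_n + O\!\left(\tfrac{k_n\log\log n}{n}\right)
\end{align}
holds with high probability, where I choose $\alpha_n=\E{X_{(k_n)}}$ (the optimal in-expectation deterministic choice differs by only lower-order corrections). Using the classical extreme-value asymptotics $\E{X_{(j)}}=\sqrt{2\log(n/j)}-\frac{\log\log(n/j)+\log 4\pi}{2\sqrt{2\log(n/j)}}+o(1/\sqrt{\log n})$, an explicit calculation of the in-expectation value gives
\begin{align}
\E{\frac{2\alpha_n}{n-k_n}\sum_{j=1}^{k_n}X_{(j)} - \frac{k_n\alpha_n^2}{n-k_n}} = \frac{2k_n\log(n/k_n)}{n} + O\!\left(\tfrac{k_n\log\log n}{n}\right) = 2R_n + O\!\left(\tfrac{k_n\log\log n}{n}\right),
\end{align}
where the last equality uses $\log\binom{n}{k_n}=k_n\log(n/k_n)+O(k_n)$ from Stirling.

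The main obstacle is the concentration of $\sum_{j=1}^{k_n}X_{(j)}$ around its expectation. The key input is that, by standard extreme-value theory (e.g.\ the representation $X_{(j)}\stackrel{d}{=}F^{-1}(1-U_{(j)})$ with $U_{(j)}$ uniform order statistics, or the classical result of Gnedenko), each $X_{(j)}$ has variance $O(1/\log(n/j))$, so by the Efron--Stein inequality (or a direct second-moment computation exploiting the monotone coupling of the order statistics in a single Gaussian perturbation) the sum has variance $O(k_n/\log(n/k_n))$. Multiplying by $\alpha_n/(n-k_n)=O(\sqrt{\log n}/n)$ and applying Chebyshev, the fluctuation contributes $O(\sqrt{k_n}/n)$ in probability, which is again dominated by $O(k_n\log\log n/n)$. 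Combining this concentration statement with the expected-value computation of the preceding paragraph and the CLT bound on $\|X^n\|^2/n$ via a union bound over a $(1-\epsilon+o(1),o(1),o(1))$ decomposition of the failure probability yields the claimed $D_n$ with excess probability at most $\epsilon$, completing the proof.
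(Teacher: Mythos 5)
Your overall structure — algebraic expansion, Berry--Esseen for $\|X^n\|^2$, extreme-value asymptotics for the coding term, union bound — matches the paper's, and your expected-value calculation for $\frac{2\alpha_n}{n-k_n}\sum_{j}X_{(j)}-\frac{k_n\alpha_n^2}{n-k_n}$ is in the right spirit. But you take a genuinely different and more difficult route on the one step that actually requires care, and you leave a gap there.

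The gap is the concentration of $\sum_{j=1}^{k_n}X_{(j)}$. You assert $\operatorname{Var}\!\big(\sum_{j\le k_n}X_{(j)}\big)=O(k_n/\log(n/k_n))$ and invoke Efron--Stein or an unspecified second-moment computation. The naive reading (``each summand has variance $O(1/\log n)$, hence the sum has variance $O(k_n/\log n)$'') is false for positively correlated variables, and the top order statistics \emph{are} positively correlated, so this step needs a real argument. An Efron--Stein bound can plausibly be pushed through, but it is subtle: when $X_i$ is one of the top $k_n$ and is resampled out of that set, the displacement of the sum is $X_{(j)}-X_{(k_n+1)}$, which for $j=1$ is $\Theta(\log k_n/\sqrt{\log n})$, so you must actually average the squared displacement over the random rank $j$. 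You also gloss over the Chebyshev bookkeeping: to get an $o(1)$ failure probability you must take the deviation threshold strictly larger than the standard deviation, and checking that the resulting deviation times $\alpha_n/(n-k_n)$ stays inside $O(k_n\log\log n/n)$ is nontrivial for small $\beta$. Finally, your ``$(1-\epsilon+o(1),o(1),o(1))$ decomposition'' gives failure probability $\le\epsilon+o(1)$ rather than the claimed $\le\epsilon$; you need to seed the Berry--Esseen quantile with $\epsilon$ minus the other terms' budget, as the paper does with $\gamma_n=\sqrt{2/n}\,Q^{-1}(\epsilon-15/\sqrt{n}-2/n)$.

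The paper avoids all of this two-sided concentration machinery with a clean, elementary observation you should adopt: the pointwise inequality
\begin{align}
\frac{1}{k_n}\sum_{j=1}^{k_n}X_{(j)}\;\ge\;X_{(k_n)}
\end{align}
reduces the entire concentration question to a \emph{lower-tail} bound on the single order statistic $X_{(k_n)}$. Since $\Phi(X_{(k_n)})$ is a Beta$(n-k_n+1,k_n)$ random variable, the lower tail is bounded directly by integrating its density (the paper's Lemma~\ref{lem:i-th order statistics}), yielding $\Pr\{X_{(k_n)}<p_n\}\le 1/n$ with $p_n=Q^{-1}\!\left(\tfrac{k_n}{n-k_n+1}\log n\right)$. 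Setting $\alpha_n=p_n-q_n$ with $q_n=\tfrac{1}{\sqrt n}Q^{-1}(1/n)$ then gives deterministic $2/n$ failure probability for the coding term with no variance computation at all. Your choice $\alpha_n=\E{X_{(k_n)}}$ is fine asymptotically, but the explicit quantile choice is what makes the one-sided argument close cleanly. I'd recommend replacing the variance/Chebyshev step with this reduction.
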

The proof is given in Section \ref{sec:Proof of optimum zero rate in excess distortion probability}. We note that the encoding and decoding can be done in almost linear time. Moreover, we do not need to store an entire codebook, but only the single real number $\alpha_n$ needs to be stored.

\begin{remark}
Note that Verd\`{u} \cite{verdu1990channel} also considered the slope of the rate-distortion function at $D_{\max}$ as a counterpart to the capacity per unit cost. However, our requirements for zero-rate optimum scheme is more stronger since we incorporates the second order (or dispersion) term $\sqrt{\frac{2}{n}} Q^{-1}(\epsilon)$. 
\end{remark}

\begin{remark}
The above scheme only describes the index of the largest element. However, the encoder can send indices of both the maximum and the minimum, which is also the zero-rate optimum. Note that the minimum value will be close to $-\sqrt{2\log n}$, and therefore we can expect the similar behavior.
\end{remark}

\section{Coding with Random Orthogonal Matrices}\label{sec:Main Results}

\subsection{Preliminaries}
Before presenting the scheme, we briefly review some key ingredients: random orthogonal matrices and spherical distributions.

Let $\cO(n)$ be the set of all $n$ by $n$ orthogonal matrices. We write $\Ab \sim \mathUO$ to denote that $\Ab$ is a random $n$ by $n$ orthogonal matrix uniformly drawn from $\cO(n)$. This uniform distribution is with respect to Haar measure, cf. \cite{halmos1950measure}. More precisely, the random matrix $\Ab$ is uniformly distributed on $\cO(n)$ if and only if $B\times\Ab$ has the same distribution with $\Ab$ for any orthogonal matrix $B\in \cO(n)$. QR decomposition of random matrix with i.i.d.\ Gaussian entries provides a uniformly distributed random orthogonal matrix. There is a more efficient methods called subgroup algorithm to generate such matrices \cite{stewart1980efficient}, \cite{diaconis1987subgroup}. Now, let us recall the definition of a radially symmetric random vector and its relation with uniform random orthogonal matrices.
\begin{definition}
An $n$-dimensional random vector $X^n$ has a \emph{spherical distribution} if and only if $X^n$ and $AX^n$ has the same distribution for all orthogonal matrices $A\in \cO(n)$.
\end{definition}

One nice property of a spherically distributed random vector $X^n$ is that its characteristic function is radially symmetric \cite{schoenberg1938metric}, i.e., $\phi({\bf t}) = \E{\exp(i{\bf t}^TX^n)} = g(\norm{\bf t})$ for some $g(\cdot)$. Therefore, it is enough to consider the norm $\norm{X^n}_2^2$ for a spherically distributed random vector $X^n$. It is clear that an  i.i.d.\ Gaussian random vector has a spherical distribution. The following lemma shows how to \emph{symmetrize} a  vector with a uniform random orthogonal matrix.

\begin{lemma}\label{lem:RandomOrthogonalSphericalRandomVector}
Suppose $\Ab$ is a uniform random orthogonal matrix on $\cO(n)$. For any random vector $X^n$, the random vector $\Ab X^n$ has a spherical distribution.
\end{lemma}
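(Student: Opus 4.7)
The plan is to invoke the defining Haar-invariance property of $\mathbf{A}$ (namely $B\mathbf{A}\overset{d}{=}\mathbf{A}$ for every $B\in\cO(n)$, which is stated just before the lemma), combined with the implicit independence of $\mathbf{A}$ and $X^n$, and conclude by conditioning on $X^n$.

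First I would fix an arbitrary deterministic $B\in\cO(n)$ and show $B\mathbf{A}X^n\overset{d}{=}\mathbf{A}X^n$, since this is precisely what spherical symmetry of $\mathbf{A}X^n$ demands. Conditioning on $X^n=x^n$, it suffices to prove $B\mathbf{A}x^n\overset{d}{=}\mathbf{A}x^n$ for every deterministic $x^n\in\mathbb{R}^n$; integrating against the law of $X^n$ then gives the full statement. This conditioning step is where independence of $\mathbf{A}$ and $X^n$ is needed (otherwise the conditional law of $\mathbf{A}$ could depend on $x^n$), and I would state the independence assumption explicitly.

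Next I would apply Haar invariance: by the very definition recalled in the preceding paragraph, $B\mathbf{A}\overset{d}{=}\mathbf{A}$, so applying both random matrices to the fixed vector $x^n$ yields $B\mathbf{A}x^n\overset{d}{=}\mathbf{A}x^n$, as required.

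The argument is essentially a one-liner, so there is no real obstacle; the only subtlety worth flagging is the independence of $\mathbf{A}$ from $X^n$ (implicitly assumed throughout the CROM construction), without which the conditioning step fails. If one prefers an analytic phrasing, the same fact follows from characteristic functions: for any $\mathbf{t}\in\mathbb{R}^n$,
\begin{align}
\E{\exp(i\mathbf{t}^\top B\mathbf{A}X^n)} = \E{\exp(i(B^\top\mathbf{t})^\top\mathbf{A}X^n)} = \E{\exp(i\mathbf{t}^\top\mathbf{A}X^n)},
\end{align}
where the last equality uses $B^\top\mathbf{A}\overset{d}{=}\mathbf{A}$ (Haar invariance under left multiplication by $B^\top\in\cO(n)$) together with independence from $X^n$.
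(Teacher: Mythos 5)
Your proof is correct and takes essentially the same route the paper intends: the paper dismisses the lemma as a ``direct consequence of the respective definitions,'' and your argument is precisely the unpacking of that one-liner via Haar invariance $B\mathbf{A}\overset{d}{=}\mathbf{A}$ and conditioning on $X^n$. The one thing you add that the paper leaves tacit is the explicit flagging of the independence of $\mathbf{A}$ and $X^n$, which is indeed needed for the conditioning step and is a worthwhile clarification.
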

The lemma is direct consequence of the respective definitions of a uniform random orthogonal matrix and a spherical distribution.

\subsection{Coding with Random Orthogonal Matrices}
For notational convenience, define $g_k:\fR^n \rightarrow \{0,1\}^n$ to be the function that finds the $k$ largest values of the input. If there is an ambiguity, the function picks the smallest index first.
Specifically, if $z^n =g_k(x^n)$, then $z_i=1$  if and only if $x_i$ is one of the $k$ largest entries of $x^n$ and $z_i=0$ otherwise. Let $A_1, A_2,\ldots,A_{L_n+1} \in \cO(n)$ be orthogonal matrices, $\alpha_1,\alpha_2,\ldots,\alpha_{L_n}$ be scalars, and assume that $k_n$ is a positive integer smaller than $n$. We are now ready to describe the iterative scheme.
\begin{algorithm}[H]
\begin{algorithmic}
\STATE Set $\Xb{1} = A_1 X^n$.
\FOR {$i=1$ to $L_n$}
\STATE Let $\mb{i} = g_{k_n}(\Xb{i})$.
\STATE Let $\Ub{i} = (\Ue{i}_1,\Ue{i}_2,\cdots,\Ue{i}_n)$ where
\begin{align}
\Ue{i}_j = 
\begin{cases}
\sqrt{\frac{n-k_n}{nk_n}} &\mbox{ if $ \me{i}_j=1$}\\
-\sqrt{\frac{k_n}{n(n-k_n)}}&\mbox{ otherwise}.
\end{cases}
\end{align}
\STATE Let $\Xb{i+1} = A_{i+1}(\Xb{i}-\alpha_i\Ub{i})$.
\ENDFOR
\STATE Send $(\mb{1},\mb{2},\ldots,\mb{L_n})$.
\end{algorithmic}
\caption{CROM}
\label{alg:RateDistortion}
\end{algorithm}

The unit vector $\Ub{i}$ indicates the $k_n$ largest values of $\Xb{i}$, and $\alpha_i$'s are scaling factors which depend on the norm of $\Xb{i}$ and will be specified later. Since $A_{i+1}^{-1} = A_{i+1}^{T}$, the inverse of the recursion is $\Xb{i} = A_{i+1}^T\Xb{i+1}+\alpha_i \Ub{i}$ for all $i$. This implies 
\begin{align}
X^n =  \alpha_1A_1^T\Ub{1} + \alpha_2 A_1^TA_2^T \Ub{2} + \cdots +\alpha_i (A_1^T\cdots A_i^T)\Ub{i} +  (A_1^T\cdots A_{i+1}^T)\Xb{i+1}.\label{eq:after ith iteration}
\end{align}
Therefore, when the decoder receives $(\mb{1},\mb{2},\ldots, \mb{i})$ for some $i\leq L_n$, it outputs the reconstruction 
\begin{align}
\Xbh{i} = \alpha_1A_1^T\Ub{1} + \alpha_2 A_1^TA_2^T \Ub{2} + \cdots +\alpha_i (A_1^T\cdots A_i^T)\Ub{i}.\label{eq:xhat after ith iteration}
\end{align}
The decoder can sequentially generate reconstructions using the relation $\Xbh{i+1} = \Xbh{i} + \alpha_i (A_1^T\cdots A_{i+1}^T)\Ub{i+1}$. Note that the decoder can compute $\Xbh{i}$ efficiently according to
\begin{align}
\Xbh{i} = A_1^T\left(\alpha_1\Ub{1} + A_2^T\left(\alpha_2 \Ub{2} + \cdots +\alpha_i  A_i^T\Ub{i}\right)\right).
\end{align}

Since we need $\log {n \choose k_n}$ nats to store (send) $\mb{i}$, rate $R$ corresponds to $L_n = \frac{nR}{\log{n\choose k_n}}$ number of iterations. We are ready to state our main theorem asserting that Algorithm \ref{alg:RateDistortion} achieves the Gaussian distortion-rate function.
\begin{theorem}\label{thm:infinite refinability}
Suppose $X^n$ is emitted by an ergodic source of marginal second moment $\sigma^2$.  For any $\beta\geq0$, let $k_n= \lceil(\log n)^{\beta}\rceil$ and suppose the rate is $R>0$. If we take
\begin{align}
\alpha_i = \sqrt{n\sigma^2\left(1-e^{-\frac{2}{L_n}R}\right)\left(e^{-\frac{i-1}{L_n}R}+e^{\frac{i-1}{L_n}R}\gamma_n\right)\left(e^{-\frac{i-1}{L_n}R}-e^{\frac{i-1}{L_n}R}\gamma_n\right)},
\end{align}
and small enough scalar $\gamma_n\equiv \gamma>0$, there exists orthogonal matrices $A_1,\cdots,A_{L_n+1}\in \cO(n)$ such that Algorithm \ref{alg:RateDistortion} satisfies
\begin{align}
\lim_{n\rightarrow \infty} \Pr{\frac{1}{n}\norm{X^n-\Xbh{i}}^2>\sigma^2\left(e^{-\frac{i}{L_n}R} + e^{\frac{i}{L_n}R}\gamma_n\right)^2\mbox{ for some $0\leq i\leq L_n$}}=0. \label{eq:infinite refinability thm}
\end{align}
Recall that \eqref{eq:infinite refinability thm} holds for any small enough $\gamma_n\equiv \gamma>0$ for any ergodic $X^n$. If we have stronger assumptions that $X^n$ is i.i.d.\ distributed with $\E{|X_1|^3}<\infty$, then we can find vanishing $\gamma_n = O\left(\frac{\log\log n}{\log n}\right)$ that satisfies \eqref{eq:infinite refinability thm}. 
\end{theorem}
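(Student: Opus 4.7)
The plan is to use the probabilistic method: draw $A_1,\dots,A_{L_n+1}$ independently from $\mathUO$, show the event in \eqref{eq:infinite refinability thm} fails with vanishing probability, and then a deterministic realization with the required property exists. By Lemma \ref{lem:RandomOrthogonalSphericalRandomVector}, conditional on the entire history $(A_1,\ldots,A_i,X^n)$ the vector $\Xb{i}$ is spherically distributed with (random) norm $r_i=\norm{\Xb{i}}$. Orthogonality of $A_{i+1}$ and the easily-checked identity $\norm{\Ub{i}}=1$ give the fundamental one-step recursion
\begin{align}
r_{i+1}^2 = r_i^2 - 2\alpha_i T_i + \alpha_i^2, \qquad T_i=\langle \Xb{i},\Ub{i}\rangle,
\end{align}
so the whole theorem reduces to a careful analysis of the scalar $T_i$.

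The heart of the argument is a concentration estimate for $T_i$. Because $\Xb{i}/r_i$ is uniform on the unit sphere, it is distributionally equivalent to $Z/\norm{Z}$ for $Z$ i.i.d.\ standard Gaussian, so $T_i$ equals $r_i/\norm{Z}$ times a specific weighted sum of the order statistics of $Z$ with weights encoded by $\Ub{i}$. Classical extreme value asymptotics \cite{gnedenko1943distribution} place the top-$k_n$ Gaussian order statistic sum at $k_n\sqrt{2\log(n/k_n)}(1+o(1))$, and Gaussian tail bounds supply exponential concentration in $k_n$. Combined with $\norm{Z}^2/n\to 1$, this yields $T_i/r_i \approx \sqrt{1-e^{-2R/L_n}}$ with high probability; a short computation verifies that the specific $\alpha_i$ in the theorem statement is engineered precisely so that, whenever $r_i^2/n$ lies within the envelope $\sigma^2(e^{-(i-1)R/L_n}+e^{(i-1)R/L_n}\gamma_n)^2$, the recursion forces $r_{i+1}^2/n$ to lie within the envelope one step later. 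This is the inductive invariant that delivers \eqref{eq:infinite refinability thm}.

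To close the induction at $i=0$, ergodicity gives $\frac{1}{n}\norm{X^n}^2\to\sigma^2$ and hence $\frac{1}{n}\norm{\Xb{1}}^2\to\sigma^2$ by orthogonality of $A_1$. Translation back to the original domain is immediate: the telescoping identity \eqref{eq:after ith iteration} together with orthogonality of the matrices implies $\norm{X^n-\Xbh{i}}^2=\norm{\Xb{i+1}}^2$, so a uniform bound on $r_{i+1}^2/n$ is precisely a uniform bound on the per-symbol reconstruction error.

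The main obstacle is the sharp, uniform-in-$i$ concentration of $T_i$. The ``for some $0\leq i\leq L_n$'' quantifier forces a union bound over $L_n$ (polynomial in $n$) events, so each individual failure probability must decay faster than $1/L_n$; this is exactly what forces the choice $k_n=\lceil(\log n)^\beta\rceil$, large enough to make the fluctuations of extreme order statistics summable across iterations yet sublinear so that the per-iteration rate $\log\binom{n}{k_n}/n$ vanishes. In the general ergodic case the convergence $\frac{1}{n}\norm{X^n}^2\to\sigma^2$ carries no quantitative rate, so $\gamma_n$ must stay an arbitrary fixed positive constant; under the stronger i.i.d.\ plus finite third-moment hypothesis, a Berry--Esseen bound on $\norm{X^n}$ combined with a sharper extreme-value expansion upgrades this to $\gamma_n=O(\log\log n/\log n)$.
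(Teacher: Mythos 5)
Your proposal follows essentially the same route as the paper's proof: randomize the matrices, track the one-step norm recursion $r_{i+1}^2=r_i^2-2\alpha_i T_i+\alpha_i^2$, Gaussianize the spherical direction $\Bb{i}$ (the paper multiplies by an independent chi-distributed $\tilde{S}$ so $\tilde S\Bb{i}$ is exactly i.i.d.\ Gaussian, which is cleaner than dividing by $\norm{Z}$ as you phrase it, but amounts to the same device), bound the top-$k_n$ Gaussian order-statistic sum via Lemma~\ref{lem:i-th order statistics}, and union-bound over the $L_n$ stages with the same ergodic vs.\ i.i.d.+third-moment dichotomy for $\gamma_n$. One small correction: the claim that $k_n$ must be ``large enough to make the fluctuations summable'' is off --- the theorem holds for $\beta=0$ (i.e.\ $k_n=1$), because the per-iteration failure probability in the paper is $O(1/n)$ independently of $k_n$, and $L_n=O(n/\log n)$ already forces the union bound to vanish; increasing $\beta$ merely reduces the number of iterations and hence the complexity, as Remark~\ref{rem:tradeoff k} explains.
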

The proof of Theorem \ref{thm:infinite refinability} is given in Section \ref{sec: proof of infinite refinability} with full details regarding the choice of $\gamma_n$. 
\begin{remark}\label{rem:tradeoff k}
Theorem \ref{thm:infinite refinability} implies that \eqref{eq:infinite refinability thm} holds for any fixed $\beta$. In terms of complexity, large $\beta$ is preferred since it implies small number of iteration which results in lower complexity. On the other hand, our result relies on the concentration of $k_n=\left(\log n\right)^{\beta}$ largest values of $n$ i.i.d.\ Gaussian random vector. If $\beta$ is too big, then the $k_n$ largest values may deviate too much. We will see the trade-off with simulation results in Section \ref{sec:Simulation Results}.
\end{remark}

\subsection{Discussion}\label{subsec:Discussion}

\subsubsection{Role of Orthogonal Matrices}\label{subsubsec:Role of Orthogonal Matrices}
It is known that an i.i.d.\ Gaussian random vector has a spherical distribution and the variance of its norm is very small. Therefore, if a random vector $X^n$ has a spherical distribution and the variance of its norm is small enough, $X^n$ can be thought of as an approximately i.i.d.\ Gaussian random vector. In the proof of CROM, we employ a randomization argument. Specifically, we assume that $\Ab_{1},\Ab_{2},\ldots,\Ab_{i+1}$ are drawn i.i.d.\ $\mathUO$ and show that equation \eqref{eq:infinite refinability thm} holds when the probability is averaged over this ensemble of random matrices. The source at $i$-th iteration $\Xb{i}=\Ab_i(\Xb{i-1}-\alpha_{i-1}\Ub{i-1})$ has spherical distribution by Lemma \ref{lem:RandomOrthogonalSphericalRandomVector}, and we can therefore expect $\Xb{i}$ to be a near Gaussian source, where we indirectly show that the norm of $\Xb{i}$ has small variance. This shows that multiplying by uniformly distributed random matrices can be thought of as a way to not only \emph{symmetrize} but also \emph{Gaussianize} the random vector so that we can apply the idea of Theorem~\ref{thm:optimum zero rate} iteratively.

Note that the conditional distribution of $\Ab X^n$ is no longer similar to Gaussian when the matrix $\Ab$ is known to both the encoder and the decoder. However, in the proof, we implicitly showed that the maximum element of ${\bf A}X^n$ is very close to $\sqrt{2\log n}$ with high probability as if it is i.i.d.\ Gaussian random vector.

A similar idea can be found in the work of Asnani et al.~\cite{asnani2013network}. The authors showed that any coding scheme for a Gaussian network source coding problem can be adapted to perform well for other network source coding problems that are not necessarily Gaussian but have the same covariances. The key idea of the paper is applying an orthogonal transformation to the sources which basically ``Gaussianizes" them so that the coding scheme for Gaussian sources are applicable in the transform domain.

\subsubsection{Storage and Computational Complexity}\label{subsubsec:Storage and Computational Complexity}
Unlike the zero-rate scheme of Section \ref{sec:Optimum Zero-Rate Gaussian Source Coding Scheme}, this scheme requires the storage of matrices (and scalars).
Since $L_n=  \left\lfloor\frac{nR}{\log {n\choose k_n}}\right\rfloor=O\left( \frac{n}{(\log n)^{\beta+1}}\right)$, both the encoder and decoder must keep $O\left(\frac{n^3}{\log^{\beta+1}n}\right)$
real values to store matrices $A_1, A_2, \ldots, A_{L_n}$. In terms of computation, the encoder finds the $k_n$ largest entries of an $n$ dimensional vector and performs a matrix-vector multiplication for each iteration.
The dominant cost is $O(n^2)$, the cost of matrix-vector multiplication. Therefore, the overall computational complexity is of order $O\left(\frac{n^3}{\log^{\beta+1}n}\right)$.

Instead of storing $A_1, A_2, \ldots, A_{L_n}$, it is also possible to store random seeds at both encoder and decoder to generate them. In this case, the CROM requires $O(1)$ storage space. However, generating a uniform random orthogonal matrix takes $O(n^3)$ \cite{stewart1980efficient}, and therefore the overall computational complexity will be of order $O\left(\frac{n^4}{\log^{\beta+1}n}\right)$.

\subsubsection{Infinitesimal Successive Refinability}\label{subsubsec:Infinitesimal Successive Refinability}
Suppose the decoder gets only the first $i$ messages $(\mb{1},\mb{2},\cdots,\mb{i})$. Note it needs to have seen only the first $n\frac{i}{L_n}R$ nats for that. With this partial message set, the decoder is able to reconstruct $\Xbh{i}$ which achieves a distortion
\begin{align}
\sigma^2\left(e^{-\frac{i}{L_n}R} + e^{\frac{i}{L_n}R}\gamma_n\right)^2, 
\end{align}
where the theorem guarantees $e^{\frac{i}{L_n}R}\gamma_n$ is arbitrarily negligible for large enough $n$. In other words, the decoder essentially achieves a distortion $\sigma^2 e^{-2\frac{i}{L_n}R}$, which is the Gaussian distortion-rate function at rate $\frac{i}{L_n}R$. Evidently, CROM can be viewed as a successive refinement coding scheme with $L_n$ stages. Since we have a growing number of stages (in $n$), the rate increment at each stage is negligible (i.e., sub-linear number of additional nats per stage) and this is a key difference from classical successive refinement problems where the number of stages is fixed. Note that Theorem \ref{thm:infinite refinability} implies that the probability of excess distortion beyond the relevant point on the distortion-rate curve at any of the successive refinement stages is negligible. Therefore, if the source is i.i.d.\ Gaussian, our coding scheme simultaneously achieves every point on the optimum distortion-rate curve. This \emph{infinitesimal successive refinability} can be considered a strengthened version of successive refinement. In other words, to implement and operate CROM, the value of the rate $R$ need not be known or set in advance, a point we will expound in Section \ref{subsubsec:(Near) Ratelessness}.

In \cite{ostergaard2011incremental}, the similar property called ``incremental refinements" was discussed. The paper discovered a new limiting behavior of additive rate-distortion function at zero-rate, and proposed a refinement idea. However, additive rate-distortion function is a mutual information between the input and the output of the Gaussian test channel, where it is not clear how to achieve it. On the other hand, we proposed a concrete scheme that achieves rate-distortion function.

\subsubsection{(Near) Ratelessness}\label{subsubsec:(Near) Ratelessness}
In the channel coding setting, it is well-known that rateless coding schemes, including Raptor codes, achieve the capacity of erasure channels. In this setting, the rate $R$ does not have to be specified in advance, and the receiver is able to decode a message upon observing sufficiently many packets (or bits), regardless of their order. As we have discussed above, CROM has a similar property in that a rate $R$ does not need to be specified in advance of the code design. This is because $\frac{R}{L_n}$ is a function of $n$ only, and therefore $\alpha_i$'s are independent to $R$. Furthemore, we will see in the proof that $\gamma_n$ depends only on $n$. If the source is i.i.d.\ $\cN(0,\sigma^2)$, the decoder can achieve a distortion $D_G(\nu R)$ upon observing fraction $\nu$ of the message bits. This is similar to a rateless code in channel coding because the decoder can achieve the optimum as soon as it collects sufficiently many of the message bits. However, the CROM decoder needs its observed bits to be a contiguous sequence at the beginning of the message bit stream while it is enough to have any combination of channel output observations in the rateless channel coding setting.

Note that our scheme can be considered as a progressive coder where ``progressive" refers to the refinability. However, it is often the case that the refinement layer of progressive code is often useless without the base layer, where refinement layers of CROM are useful by themselves. More precisely, the decoder can have the following reconstruction based only on $\mb{i_1},\ldots, \mb{i_l}$,
\begin{align}
{\bf \hat{X}} = \sum_{j=1}^{l} \alpha_{i_j} (A_1^T\cdots A_{i_j}^T)\Ub{i_j}
\end{align}
where with $\mb{1},\ldots,\mb{i_l}$ the reconstruction would be
\begin{align}
\Xbh{i_l} = \sum_{j=1}^{i_l} \alpha_{j} (A_1^T\cdots A_{j}^T)\Ub{j}.
\end{align}

\subsubsection{Complete Separability}\label{subsubsec:Complete Separability}
 In the classical separation scheme, the source encoder must know the channel capacity $C$ in order to design the source coding scheme with rate $R(D)<C$ where the source encoder often does not have this prior knowledge. However, if the source is Gaussian, the proposed scheme achieves the optimum distortion without channel information. Let $C_0$ be a sufficiently large constant and say the encoder uses the proposed scheme with rate $R= C_0$. When the decoder receives the first $C/C_0$ fraction of message bits and performs the reconstruction, we achieve the distortion $D$ that satisfies $R_G(D) = C$ due to the \emph{infinitesimal successive refinability}. Since we can achieve the optimum performance using a simple scheme while the source encoder is blind to the capacity of the link, we can call this property \emph{complete separability}.

\tikzstyle{format} = [thin]
\tikzstyle{medium} = [rectangle, draw, thin, fill=blue!20, minimum height=2.5em, minimum width = 4em]
\begin{figure}[H]
\centering
\begin{tikzpicture}[node distance=3cm, auto,>=latex', thick]
    \path[->] node[format] (src) {$X^n$};
    \path[->] node[medium, right of=src] (enc) {Encoder}
                  (src) edge node {} (enc);
    \path[->] node[medium, right of=enc] (relay) {Relay}
                  (enc) edge node {$C_1$} (relay);
    \path[->] node[medium, right of=relay] (dec) {Decoder}
                  (relay) edge node {$C_2$} (dec);
    \path[->] node[format, right of=dec] (rec) {$\hat{X}^n$}
                  (dec) edge node {} (rec);
\end{tikzpicture}
\caption{Relay Network} \label{fig:relay network}
\end{figure}
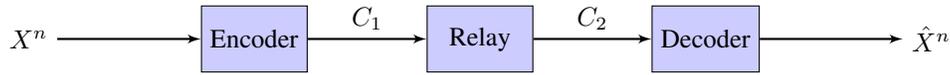

Another interesting example is a relay network without a direct link, as described in Figure \ref{fig:relay network}, where the source is i.i.d.\ Gaussian. Both the links from the encoder to the relay node and the relay node to the decoder are noiseless with capacity $C_1$ and $C_2$ respectively, when we assume that $C_1>C_2$. If the encoder knows the capacity of both links, then the problem is equivalent to the successive refinement problem. However, consider the case where the encoder only knows $C_1$. If the encoder is optimized only for the first link, the relay node has to decode the whole message and compress it again with rate $C_2$. However, if we use CROM, the relay node can simply send the first $\frac{C_2}{C_1}$ fraction of messages to the decoder and the decoder will be able to have optimal reconstruction with respect to its own link capacity.

\subsubsection{Convergence Rate}\label{subsubsec:Convergence Rate}
After the $i$-th iteration, the decoder can achieve a distortion 
\begin{align}
\sigma^2 \left(e^{-\frac{i}{L_n}R}+\gamma_n e^{\frac{i}{L_n}R}\right)^2 &= \sigma^2\left(e^{-2\frac{i}{L_n}R} + 2\gamma_n + e^{2\frac{i}{L_n}R}\gamma_n^2\right)\\
&\leq \sigma^2\left(e^{-2\frac{i}{L_n}R} + 2\gamma_n + e^{2R}\gamma_n^2\right).
\end{align}
Recall that the Gaussian distortion-rate function at rate $\frac{i}{L_n}R$ is $\sigma^2\exp\left(-2\frac{i}{L_n}R\right)$, and therefore the gap between the achieved distortion and $D_G\left(\frac{i}{L_n}R\right)$ is uniformly bounded by $2\sigma^2 \gamma_n+ \sigma^2e^{2R} \gamma_n^2$ at all stages. Note that if the source is i.i.d.\ with bounded $\E{|X_1|^3}$, we can choose vanishing $\gamma_n=O\left(\frac{\log\log n}{\log n}\right)$ such that the probability of error decays on the order of $O\left(\frac{1}{\log n}\right)$.

\section{Comparison to SPARC}\label{sec:Comparison to SPARC}
Recall that CROM can be viewed as a nonzero-rate generalization of the zero-rate scheme introduced in Section \ref{sec:Optimum Zero-Rate Gaussian Source Coding Scheme}.
On the other hand, SPARC implements the idea of describing a codeword with a linear combination of sub-codewords. Though the derivations of these two schemes were based on different ideas, they share several similarities. In this section, we outline the similarities and differences.

\subsection{Sparse Linear Regression Codes}\label{subsec:Sparse Linear Regression Codes}
Let us briefly review SPARC. Let $X^n$ be the first $n$ components of an ergodic source with mean 0 and variance 1. Define $L$  sub-codebooks $\cC_1,\cC_2,\ldots,\cC_L$, where each sub-codebook has $M$  sub-codewords. Sub-codewords are generated independently according to the standard normal distribution. Parameters $M$ and $L$ are chosen to be $M^L = e^{nR}$, where $R$ is the rate of the scheme, and define constants $c_1,c_2,\ldots, c_L$ appropriately. Then, the following algorithm exhibits the main structure of the sparse linear regression code (SPARC), which was presented in  \cite{venkataramanan2014lossy} and shown to achieve the Gaussian distortion-rate function for any ergodic source (under appropriate choice of parameters). 
\begin{algorithm}[H]
\begin{algorithmic}
\STATE Set $\Xb{1} = X^n$.
\FOR {$i=1$ to $L$}
\STATE Let $\Ub{i} = \argmax_{U^n\in \cC_i} <\Xb{i},U^n>$ and $\mb{i}$ be the index of $\Ub{i}$.
\STATE Let $\Xb{i+1} = \Xb{i}-c_i\Ub{i}$.
\ENDFOR
\STATE Send $(\mb{1},\mb{2},\ldots,\mb{L_n})$.
\end{algorithmic}
\caption{SPARC}
\label{alg:SPARCs}
\end{algorithm}
Note that there is another version of SPARC \cite{venkataramanan2014lossy-minimum} where encoding is not done sequentially but is done by exhaustive search. Since we are focusing on efficient lossy compressors, we only consider the SPARC described in Algorithm \ref{alg:SPARCs} throughout the paper. 

\subsection{Main Differences}\label{subsec:Main Differences}
In SPARC, the codebook consists of $L$  sub-codebooks where each sub-codebook has $M$ codewords. Our proposed iterative scheme is similar to SPARC with $L = \frac{nR}{\log n}$ and $M = n$; finding the sub-codeword that achieves the maximum inner product can be viewed as finding the maximum entries after multiplying the matrix in our iterative scheme. 

There are, however, two main differences. The first is that our scheme finds the $k_n$ largest values at each iteration. This implies that one iteration of our proposed encoding scheme is equivalent to $k_n$ iterations of SPARC's encoding. In Section \ref{subsubsec:Storage and Computational Complexity}, we have seen that CROM requires $O\left(\frac{n^2}{\log^{\beta+1} n}\right)$ operations per symbol, for an arbitrarily chosen $\beta>0$. The gap between the distortion and $D_G(R)$ is $\frac{\log \log n}{\log n}$. In SPARC, the gap between the distortion and $D_G(R)$ is $\frac{\log\log M}{\log M}$. In order to calibrate with CROM, we can set $M=n$. However, $ML$ operation per symbol is required for SPARC encoding where $M^L = e^{nR}$, and therefore the number of operations for SPARC is $O\left(\frac{n^2}{\log n}\right)$. Thus, SPRAC requires $\log^{\beta} n$ times more operations. The same relation holds when we consider the storage complexity. CROM requires to store $O\left(\frac{n^3}{\log^{\beta+1} n}\right)$ real numbers, where the SPARC encoder and decoder have to store $O\left(\frac{n^3}{\log n}\right)$ real numbers.

The second difference is the structure of the sub-codebook. The columns of orthogonal matrix are orthogonal to each other, and this implies that CROM is similar to SPARC with structured sub-codewords. For example, if $k_n=1$, all sub-codewords of CROM are orthogonal to each other, where SPARC draws sub-codewords according to i.i.d.\ Gaussian.

\subsection{Key Lemma}\label{subsec:Proofs}
As we discussed in Section \ref{subsec:Main Differences}, sub-codewords in CROM is drawn from the surface of the sphere while sub-codewords in SPARC are drawn according to the i.i.d.\ Gaussian distribution. Under this difference, we would like to introduce some dualities. For example, consider the following lemma used in the proof of SPARC.
\begin{lemma}\cite[Lemma 1]{venkataramanan2014lossy}
Let ${\bf Z}_1, \ldots, {\bf Z}_N$ be independent random vectors with i.i.d.\ standard Gaussian elements. Then for any random vector ${\bf B}$ supported on the $n$ dimensional unit sphere and independent of the ${\bf Z}_i$'s,  the inner products $\{ <{\bf Z}_i,{\bf B}>\}_{i=1}^N$ are i.i.d.\ standard Gaussian random variables that are independent of ${\bf B}$.
\end{lemma}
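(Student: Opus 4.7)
The plan is to exploit the rotational invariance of the i.i.d.\ standard Gaussian distribution and the independence structure via a simple conditioning argument on ${\bf B}$.

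First, I would fix a deterministic unit vector $b \in \fR^n$ and consider the conditional distribution given ${\bf B} = b$. Since ${\bf B}$ is independent of $({\bf Z}_1, \ldots, {\bf Z}_N)$, the conditional distribution of these vectors given ${\bf B} = b$ is unchanged: they remain independent with i.i.d.\ standard Gaussian entries. Then $\langle {\bf Z}_i, b \rangle = \sum_j Z_{i,j} b_j$ is a linear combination of i.i.d.\ standard Gaussians with coefficients whose squared sum is $\|b\|^2 = 1$, hence $\cN(0,1)$. Jointly, the vector $(\langle {\bf Z}_1, b\rangle, \ldots, \langle {\bf Z}_N, b\rangle)$ is Gaussian with zero mean and identity covariance (the $i \neq j$ off-diagonals vanish by independence of ${\bf Z}_i$ and ${\bf Z}_j$), so it is i.i.d.\ standard Gaussian.

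Next, I would observe that this conditional distribution does not depend on the particular unit vector $b$. By a standard disintegration argument, if the conditional law of $\{\langle {\bf Z}_i, {\bf B}\rangle\}_{i=1}^N$ given ${\bf B} = b$ is the same product law $\cN(0,1)^{\otimes N}$ for every $b$ on the unit sphere, then (i) the unconditional joint law of $\{\langle {\bf Z}_i, {\bf B}\rangle\}_{i=1}^N$ is also $\cN(0,1)^{\otimes N}$, and (ii) $\{\langle {\bf Z}_i, {\bf B}\rangle\}_{i=1}^N$ is independent of ${\bf B}$. This yields both conclusions simultaneously.

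There is no real obstacle here; the entire argument is a one-line consequence of the rotational invariance of Gaussians plus independence, dressed up with a conditioning step. The only mildly delicate point is to handle the conditioning rigorously when ${\bf B}$ has a continuous distribution on the sphere, which is standard and can be justified either by Fubini on characteristic functions $\E{\exp(i\sum_k t_k \langle {\bf Z}_k,{\bf B}\rangle)} = \E{\exp(-\tfrac{1}{2}\sum_k t_k^2)} = \exp(-\tfrac{1}{2}\|t\|^2)$, or by a regular conditional probability argument.
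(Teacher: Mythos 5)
This lemma is quoted in the paper as a citation to \cite[Lemma~1]{venkataramanan2014lossy}; the paper itself supplies no proof, so there is nothing internal to compare against. Your argument is correct and is the standard one: conditioning on ${\bf B} = b$, using independence of ${\bf B}$ from the ${\bf Z}_i$'s together with rotational invariance of the Gaussian to see that the conditional law of $\{\langle{\bf Z}_i, b\rangle\}_{i=1}^N$ is $\cN(0,1)^{\otimes N}$ for every unit vector $b$, and then invoking the fact that a conditional law not depending on the conditioning variable yields both the unconditional law and independence. The characteristic-function computation you sketch at the end is a clean way to make the disintegration step rigorous without worrying about regular conditional probabilities on the sphere, and it matches the brief proof given in the cited SPARC paper.
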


On the other hand, recall Lemma \ref{lem:RandomOrthogonalSphericalRandomVector}, which asserts that any random vector multiplied by uniform random orthogonal matrix has a spherical distribution. 

\subsection{Successive Refinability}\label{subsec:Successive Refinability}
That SPARC possesses the successive refinability property was briefly mentioned by the authors, however, the main theorem in \cite{venkataramanan2014lossy} only guarantees that the probability of error at the end of the process will vanish. On the other hand, we have seen that CROM has uniform convergence rates, uniformly and simultaneously on all points on the rate distortion curve, in Section \ref{subsubsec:Convergence Rate}.

\section{Simulation Results}\label{sec:Simulation Results}
In this section, we test CROM via simulations on sources with $\sigma^2=1$. We choose
\begin{align}
\alpha_i = \sqrt{n\left(1-e^{-\frac{2}{L_n}R}\right)}e^{-\frac{i-1}{L_n}R}.
\end{align}
Note that parameters are not optimized for the expected distortion, so there might be a better choice of $\alpha_i$. All results are averaged over 100 random trials.

First, We compare the performance of CROM and SPARC in Figure \ref{fig:Comparison to SPARC}. We choose i.i.d.\ standard Gaussian source $X^n$ where $n=256$. We simulated for $M = 128, 256, 512$ for SPARC. Note that the complexity of SPARC is higher when $M$ is large. We let $k_n=1$ for CROM which corresponds to $M=256$ case of SPARC. Note that the performance of CROM is similar to the performance of SPARC with $M=256$.

As we discussed in Remark \ref{rem:tradeoff k}, the complexity of CROM decreases when $k_n$ is large, however, the performance will be worse when $k_n$ is large. Figure \ref{fig:different k n1024} shows trade-off between the small and the large $k_n$.

In order to simulate CROM with higher $n$, we use structured orthogonal matrices to reduce the storage and computational complexity. Note that any orthogonal matrix is a product of $\frac{n(n-1)}{2}$ Givens rotations which are matrices of the form

\begin{align}
\begin{pmatrix}
1&\cdots&0&\cdots&0&\cdots&0\\
\vdots& \ddots&\vdots& &\vdots& &\vdots\\
0&\cdots&\cos \theta&\cdots&-\sin \theta&\cdots&0\\
\vdots& &\vdots& &\vdots& &\vdots\\
0&\cdots&\sin\theta&\cdots&\cos \theta&\cdots&0\\
\vdots& &\vdots& &\vdots& \ddots&\vdots\\
0&\cdots&0&\cdots&0&\cdots&1\\
\end{pmatrix}.
\end{align}

This suggests to construct sparse orthogonal matrices using Givens rotations as a building block. Suppose $n$ be the power of $2$, i.e., $n = 2^s$. We recursively define the sparse orthogonal matrices $A_r^{(s)}$ for $1\leq r\leq s$.
\begin{align}
A_r^{(s)}(\theta_1,\ldots,\theta_{n/2}) &= \begin{cases}
\begin{pmatrix} A_{r-1}^{(s-1)}(\theta_1,\ldots,\theta_{n/4})
&{\bf 0}\\
{\bf 0}
&A_{r-1}^{(s-1)}(\theta_{n/2+1},\ldots,\theta_{n/2})\end{pmatrix}&\mbox{ if $r>1$}\\
\begin{pmatrix}
\mbox{diag}(\cos\theta_1,\ldots,\cos\theta_{n/2})
&\mbox{diag}(-\sin\theta_1,\ldots,-\sin\theta_{n/2})\\
\mbox{diag}(\sin\theta_1,\ldots,\sin\theta_{n/2})
&\mbox{diag}(\cos\theta_1,\ldots,\cos\theta_{n/2})
\end{pmatrix} &\mbox{ if $r=1$}
\end{cases},
\end{align}
where $\mbox{diag}(x_1,\ldots,x_n)$ is a diagonal matrix with entries $x_1,\ldots,x_n$. The following matrices \eqref{eq:matrix_type1}, \eqref{eq:matrix_type2}, \eqref{eq:matrix_type3} show three types of sparse orthogonal matrices when $n=8$.
\begin{scriptsize}
\begin{align}
A_1^{(3)}(\theta_1,\theta_2,\theta_3,\theta_4) = 
\begin{pmatrix}
\cos \theta_1 &0&0&0& -\sin\theta_1 & 0&0&0\\
0&\cos \theta_2 &0&0&0& -\sin\theta_2 &0&0\\
0&0&\cos \theta_3 &0&0&0& -\sin\theta_3 &0\\
0&0&0&\cos \theta_4 &0&0&0& -\sin\theta_4 \\
\sin \theta_1 &0&0&0& \cos\theta_1 & 0&0&0\\
0&\sin \theta_2 &0&0&0& \cos\theta_2 & 0&0\\
0&0&\sin \theta_3 &0&0&0& \cos\theta_3 & 0\\
0&0&0&\sin \theta_4 &0&0&0& \cos\theta_4  \\
\end{pmatrix}\label{eq:matrix_type1}\\
A_2^{(3)}(\theta_1,\theta_2,\theta_3,\theta_4) = 
\begin{pmatrix}
\cos \theta_1 &0& -\sin\theta_1 & 0&0&0&0&0\\
0&\cos \theta_2 &0& -\sin\theta_2 &0&0&0&0\\
\sin \theta_1 & 0&\cos\theta_1 & 0&0&0&0&0\\
0&\sin \theta_2&0 & \cos\theta_2 & 0&0&0&0\\
0&0&0&0&\cos \theta_3 & 0& -\sin\theta_3 &0\\
0&0&0&0&0&\cos \theta_4 &0& -\sin\theta_4 \\
0&0&0&0&\sin \theta_3 &0& \cos\theta_3 & 0\\
0&0&0&0&0&\sin \theta_4 &0& \cos\theta_4  \\
\end{pmatrix}\label{eq:matrix_type2}\\
A_3^{(3)}(\theta_1,\theta_2,\theta_3,\theta_4) = 
\begin{pmatrix}
\cos \theta_1 & -\sin\theta_1 & 0&0&0&0&0&0\\
\sin \theta_1 & \cos\theta_1 & 0&0&0&0&0&0\\
0&0&\cos \theta_2 & -\sin\theta_2 &0&0&0&0\\
0&0&\sin \theta_2 & \cos\theta_2 & 0&0&0&0\\
0&0&0&0&\cos \theta_3 & -\sin\theta_3 &0&0\\
0&0&0&0&\sin \theta_3 & \cos\theta_3 & 0&0\\
0&0&0&0&0&0&\cos \theta_4 & -\sin\theta_4 \\
0&0&0&0&0&0&\sin \theta_4 & \cos\theta_4  \\
\end{pmatrix}\label{eq:matrix_type3}.
\end{align}
\end{scriptsize}
Each matrix $A^{(r)}_s$ is a product of $n/2$ Givens rotations. Therefore, the product of $\log n$ consecutive sparse orthogonal matrices is equivalent to the product of $\frac{n\log n}{2}$ Givens rotations. If we draw angles uniformly randomly, the product is expected to have similar distribution to uniform random orthogonal matrix. Since each row has exactly two non-zero elements, the matrix multiplication requires $O(n)$ operations. Also, the storage complexity is $O(n)$.

Another well-known orthogonal matrix is discrete cosine transform matrix of type-II (DCT-II). We can use Fast Fourier Transform (FFT) algorithm to multiply DCT matrix efficiently. Also, DCT matrix requires $O(1)$ of storage space. 

Instead of original CROM with uniform random orthogonal matrices, we propose two modified version of CROM using the above structured orthogonal matrices. First, at $i$-th iteration, we choose $A_{r}^{(s)}(\theta^{(s)}_{r,1},\ldots,\theta^{(s)}_{r,n/2})$ where $i \equiv r$ (mod $s$), and $\theta^{(s)}_{r,1},\ldots,\theta^{(s)}_{r,n/2}$ are uniformly sampled from $[0,2\pi]$. The second approach is using $A^{\mbox{DCT}} \times A_{r}^{(s)}(\theta^{(s)}_{r,1},\ldots,\theta^{(s)}_{r,n/2})$ where $A^{\mbox{DCT}}$ denotes the DCT-II matrix. Figure \ref{fig:structured_matrices} shows performances of two modified algorithms when $n=1024$ and $k_n=1$. Note that the performance of sparse orthogonal matrices is worse than uniformly generated orthogonal matrices, on the other hand, the performance of sparse orthogonal matrices with DCT-II matrix is comparable to those of uniform orthogonal matrices.

Since modified CROM has lower complexity, we can test CROM with larger $n$. Figure \ref{fig:different n} shows the distortion-rate curve of the second approach with sparse orthogonal matrices and the DCT-II matrix where $n=65536$ and $k=1$. Compare to the simulation result of $n=1024$ with uniform random orthogonal matrices, its distortion-rate curve shows better performance.

\begin{figure}
\centering
\begin{subfigure}[t]{0.45\textwidth}
\centering
\includegraphics[width = \textwidth]{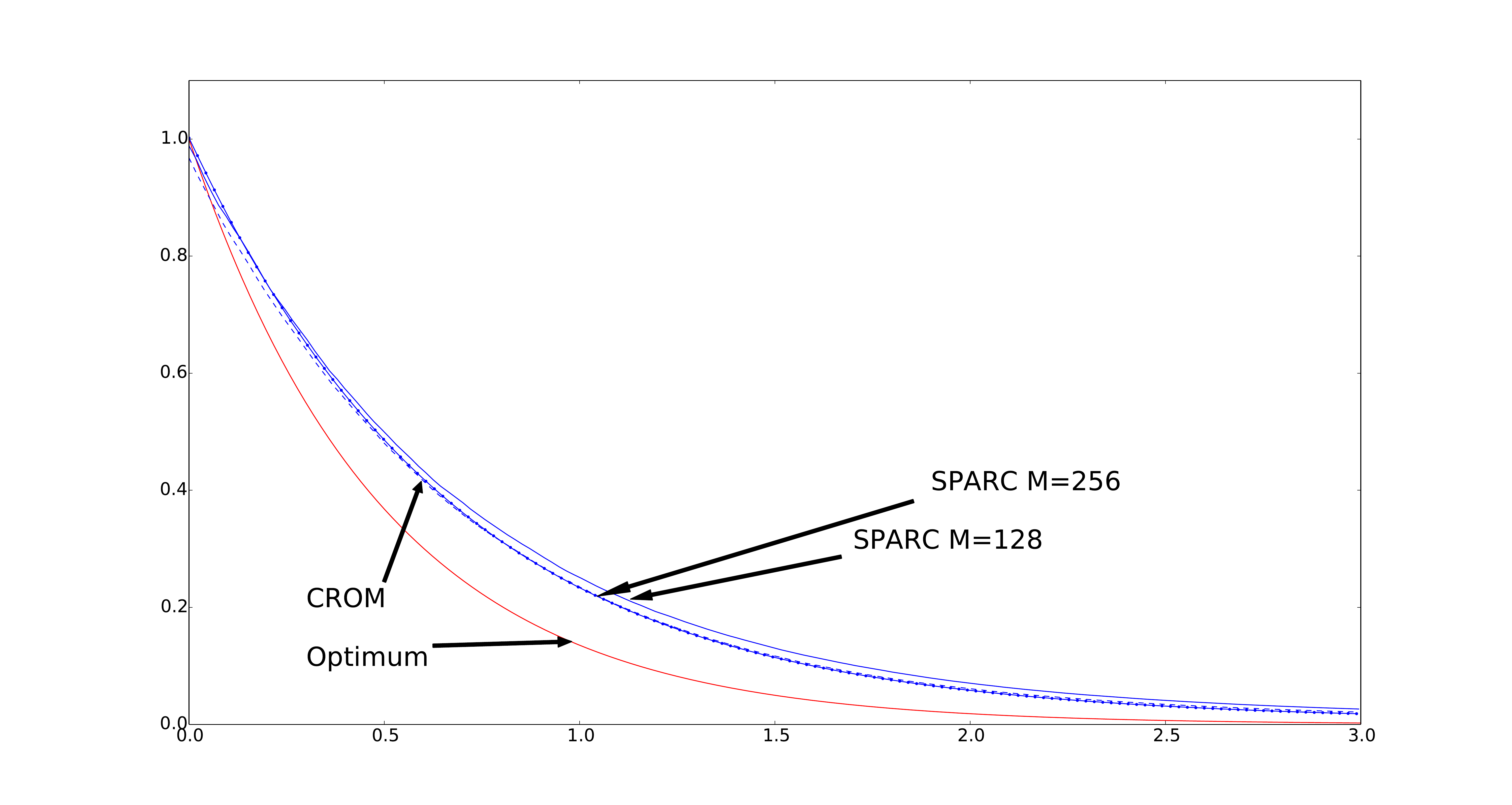}
\caption{Distortion-rate curves of CROM and SPARC where $n=256$.}
\label{fig:Comparison to SPARC}
\end{subfigure}~
\begin{subfigure}[t]{0.45\textwidth}
\centering
\includegraphics[width = \textwidth]{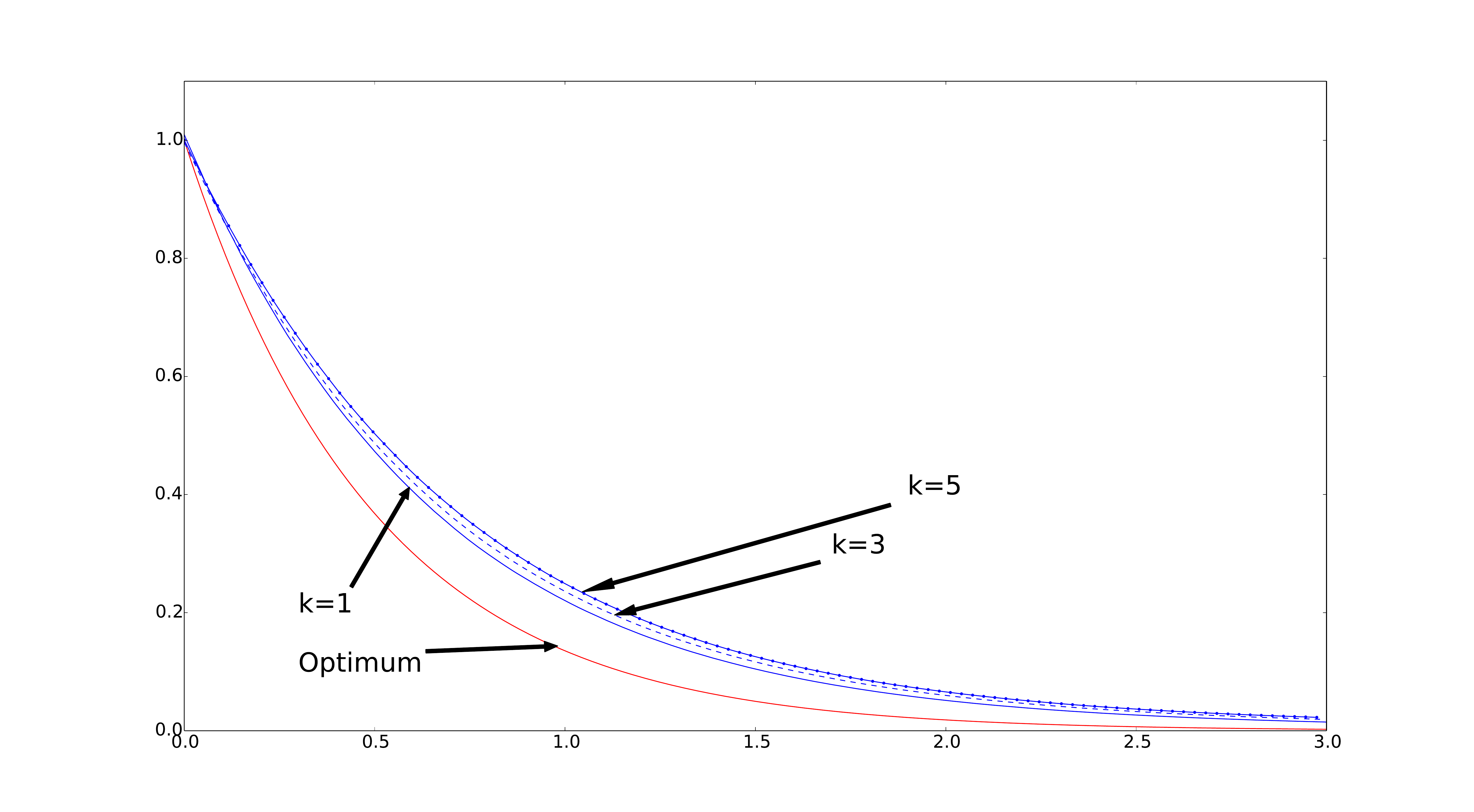}
\caption{Distortion-rate curves for $k=1,3,5$ where $n=1024$.}
\label{fig:different k n1024}
\end{subfigure}\\

\begin{subfigure}[t]{0.45\textwidth}
\centering
\includegraphics[width = \textwidth]{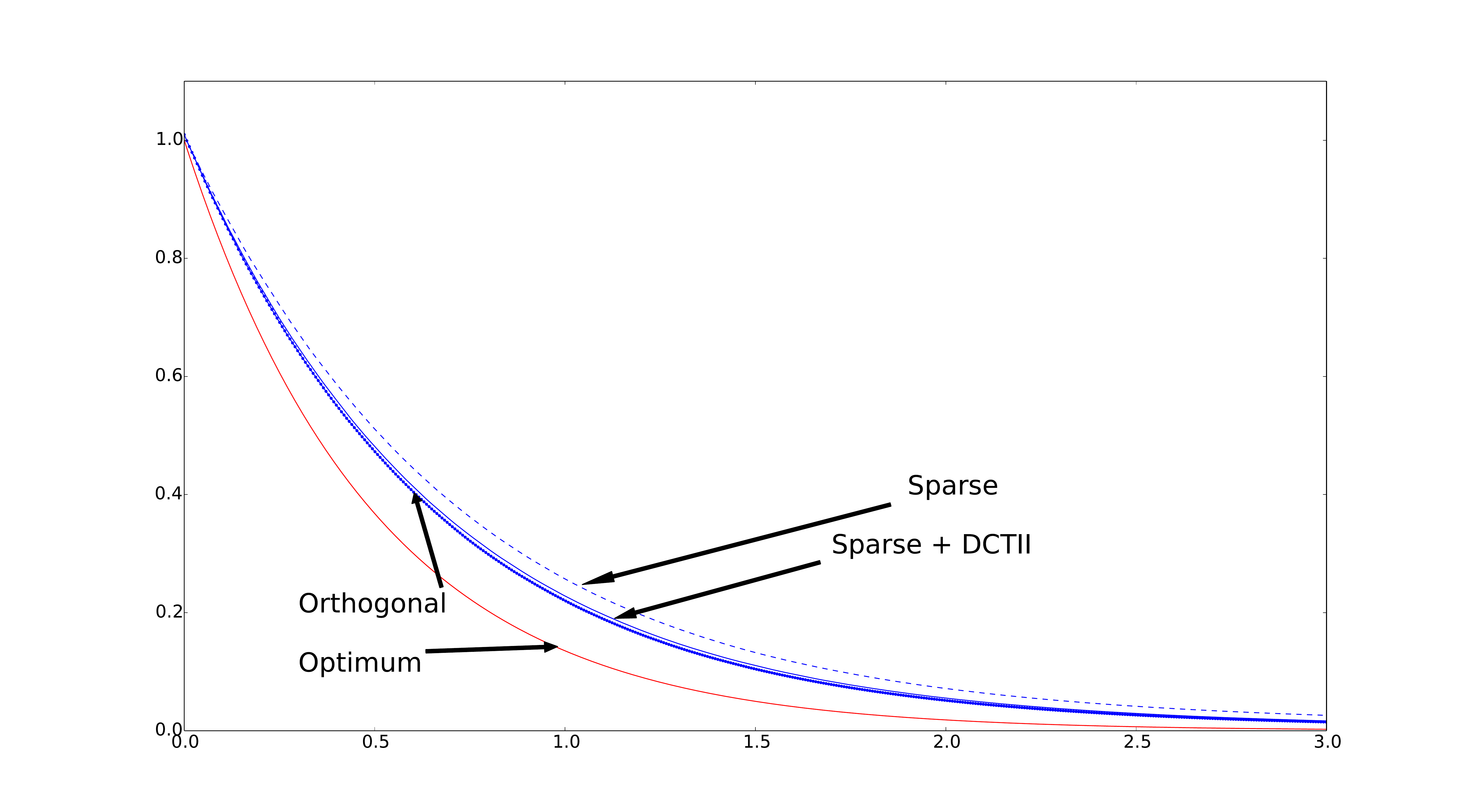}
\caption{Distortion-rate curves for different matrix constructions where $n=1024$.}
\label{fig:structured_matrices}
\end{subfigure}~
\begin{subfigure}[t]{0.45\textwidth}
\centering
\includegraphics[width = \textwidth]{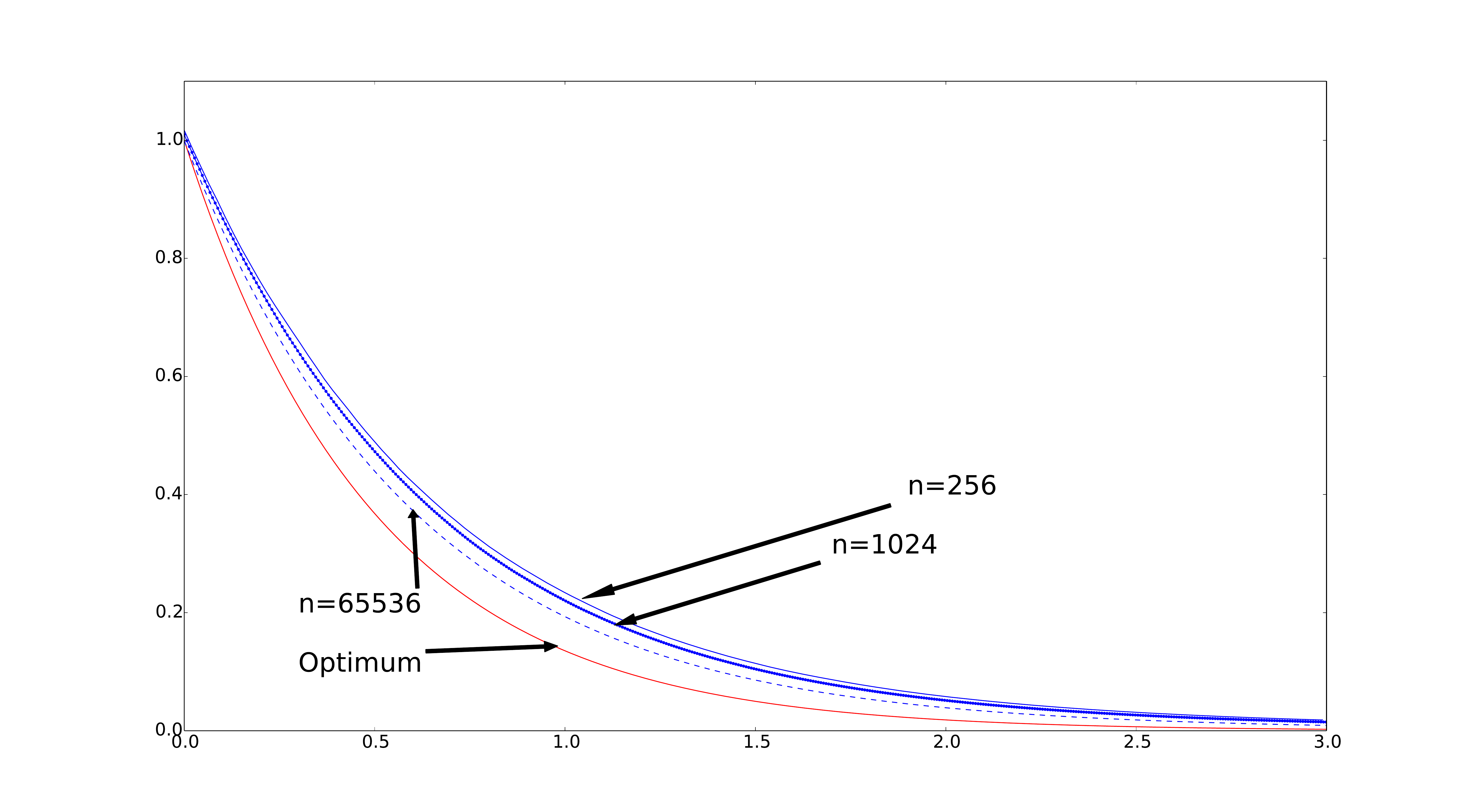}
\caption{Distortion-rate curves for $n=256,1024,65536$ where $k=1$.}
\label{fig:different n}
\end{subfigure}
\caption{Distortion-rate curves of CROM and SPARC. $x$-axis shows the rate in nats, and the $y$-axis represents the average distortion.}
\end{figure}

\section{Channel Coding Dual}\label{sec:Channel Coding Dual}
In \cite{polyanskiy2011minimum}, we can find a dual result in the Gaussian channel coding problem. In this section, we briefly review the idea of \cite{polyanskiy2011minimum} (with slightly changed notation). Consider the AWGN channel $Y_i = X_i+Z_i$ where $Z^n$ is an i.i.d.\ standard normal random vector. Suppose the number of messages is $n$, i.e., the rate of the scheme is $R_n = \frac{\log n}{n}$ nats per channel use. Based on message $m\in\{1,2,\ldots, n\}$, the encoder simply sends $X^n$ where $X_m = (1+\epsilon_n)\sqrt{2\log n}$ and $X_i = -(1+\epsilon_n)\frac{\sqrt{2\log n}}{n-1}$ if $i\neq m$. Then, the decoder finds the index of the maximum value of $Y^n$ and recovers the message, i.e., $\hat{m} = \arg\max_{1\leq i\leq n} Y_i$. The average power that the encoder uses is $P_n = 2(1+\epsilon_n)^2\frac{1}{n-1} \log n$. We will specify $\epsilon_n$ such that $\lim_{n\rightarrow\infty}\epsilon_n = 0$.

Before considering the probability of error $P_e^{(n)}$, let us introduce the following useful lemma. 
\begin{lemma}
Let $Z^n$ be an i.i.d.\ standard normal random vector, then
\begin{align}
\displaystyle\Pr{\max_{1\leq i\leq n} Z_i>\sqrt{2\log n}} \leq \frac{1}{\sqrt{\log n}}.
\end{align}
\end{lemma}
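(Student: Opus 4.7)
The plan is to reduce the problem to a one-dimensional Gaussian tail estimate via a union bound, then apply the standard Mills ratio inequality. Since the $Z_i$ are identically distributed, the union bound immediately gives
\begin{align}
\Pr{\max_{1\leq i\leq n} Z_i > \sqrt{2\log n}} \leq n\,\Pr{Z_1 > \sqrt{2\log n}}.
\end{align}

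Next I would invoke Mills' ratio: for any $t>0$, $\Pr{Z_1 > t} \leq \frac{1}{t\sqrt{2\pi}}\,e^{-t^2/2}$. Setting $t = \sqrt{2\log n}$, the exponential factor becomes exactly $1/n$ since $t^2/2 = \log n$, and the prefactor is $\frac{1}{\sqrt{2\log n}\cdot\sqrt{2\pi}} = \frac{1}{\sqrt{4\pi\log n}}$. Combining,
\begin{align}
n\,\Pr{Z_1 > \sqrt{2\log n}} \;\leq\; n\cdot\frac{1}{n\sqrt{4\pi\log n}} \;=\; \frac{1}{\sqrt{4\pi\log n}}.
\end{align}
Because $4\pi > 1$, the right-hand side is bounded above by $1/\sqrt{\log n}$, which matches the claimed bound.

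The argument is essentially immediate once one notices that the threshold $\sqrt{2\log n}$ is chosen precisely so that Mills' ratio absorbs the $n$ from the union bound. There is no real obstacle; the only mild check is that the slack factor $\sqrt{4\pi}$ coming from the Gaussian density constant is more than sufficient to cover the gap between $\frac{1}{\sqrt{4\pi\log n}}$ and the desired $\frac{1}{\sqrt{\log n}}$. If one wanted to be even more careful, a slightly sharper form of Mills' ratio, $\Pr{Z_1 > t} \leq \frac{1}{2}e^{-t^2/2}$, could be used, but that weakens rather than helps the bound, so the approach above is the cleanest route.
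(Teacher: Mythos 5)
Your proof is correct and follows essentially the same route as the paper: a union bound (the paper phrases it as $1-(1-Q(a))^n \le nQ(a)$, which is the same inequality) followed by the Mills ratio bound $Q(t)\le \frac{1}{t\sqrt{2\pi}}e^{-t^2/2}$ at $t=\sqrt{2\log n}$, giving $\frac{1}{\sqrt{4\pi\log n}}\le\frac{1}{\sqrt{\log n}}$.
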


\begin{proof}
\begin{align}
\Pr{\max_{1\leq i\leq n} Z_i>\sqrt{2\log n}} =& 1- \Phi(\sqrt{2\log n})^n\\
=&1- (1-Q(a))^n\\
\leq & nQ(a)\\
\leq & n \frac{1}{\sqrt{2\log n}} \frac{1}{\sqrt{2\pi}} \exp\left(-\frac{2\log n}{2}\right)\\
\leq & \frac{1}{\sqrt{\log n}},
\end{align}
where $\Phi(x)$ is a standard normal cumulative distribution function and $Q(x) = 1-\Phi(x)$. We used the fact that $Q(x)\leq \frac{1}{x}f(x)$ where $f(x)$ is a probability density function of standard normal random variable.
\end{proof}

Now we are ready to bound $P_e^{(n)}$. Without loss of generality, we can assume that $m=1$.
\begin{align}
P_e^{(n)} =& \Pr{Y_1<\max_{2\leq i\leq n} Y_i}\\
=&\Pr{(1+\epsilon_n)\sqrt{2\log n} +Z_1 < -(1+\epsilon_n)\frac{\sqrt{2\log n}}{n-1}+\max_{2\leq i\leq n} Z_i}\\
=&\Pr{\frac{n}{n-1}(1+\epsilon_n)\sqrt{2\log n} +Z_1 <\max_{2\leq i\leq n} Z_i}\\
\leq&\Pr{\sqrt{2\log n} <\max_{2\leq i\leq n} Z_i}+\Pr{\frac{1+n\epsilon_n}{n-1}\sqrt{2\log n} +Z_1 <0}\\
\leq&\frac{1}{\sqrt{\log n}}+\Pr{\frac{1+n\epsilon_n}{n-1}\sqrt{2\log n} +Z_1 <0}.
\end{align}
If we choose $\epsilon_n$ such that $\frac{1+n\epsilon_n}{n-1} = \left(\log n\right)^{-1/3}$, then $\frac{1+n\epsilon_n}{n-1}\sqrt{2\log n}$ goes to infinity as $n$ grows. Therefore, 
\begin{align}
\lim_{n\rightarrow \infty} P_e^{(n)} = 0.
\end{align}

Since $P_n$ converges to zero as $n$ grows, we can approximate the capacity by $C(P_n) = \frac{1}{2}\log (1+P_n)\approx \frac{P_n}{2} = (1+\epsilon_n)^2 \frac{\log n}{n-1}$. It is clear that $\frac{R_n}{C(P_n)}$ converges to one as $n$ grows, i.e.,
\begin{align}
\lim_{n\rightarrow \infty} \frac{R_n}{C(P_n)} = 1.
\end{align}
This is reminiscent of the definition of a zero-rate optimal scheme in the source coding problem. We can say that this scheme is zero-rate optimal in the channel coding setting. We further note that the encoding and decoding can be done in almost linear time, and essentially no  extra information needs to be stored.

However, unlike CROM, we could not find an iterative scheme building on this zero-rate one that achieves reliable communication at a positive rate. The main challenge is that the tail behavior on the left side is very different from the right side. In the source coding problem, a small maximum value (which corresponds to the left tail) yields an error, while it is a large maximum value (which corresponds to the right tail) that yields an error in the channel coding problem. More precisely, the cumulative distribution function of the maximum of Gaussian random variables converges to $\exp\left(-e^{-x}\right)$ with normalizing constants. This function decays double-exponentially as $x$ decreases, which allows a small cumulative error for our iterative scheme CROM. However, $\exp\left(-e^{-x}\right)$ converges to one only exponentially as $x$ grows. Therefore, in the similar channel coding scheme, the cumulative error does not remain negligible when we employ the scheme iteratively. We believe that for similar reasons a channel coding analog of SPARC with efficient encoding would not work.

Note that Erez et al.\ discussed rateless coding for Gaussian channels \cite{erez2012rateless}. The goal of the paper ``Rateless Coding for Gaussian Channels seems design a channel code where the transmitter can be blind to the channel gain and the variance of the noise. Note that the proposed rateless code requires the base code that achieves the capacity. On the other hand, we would like to design a concrete coding scheme that achieves the channel capacity when the channel information is known.

\section{Proofs}\label{sec:Proofs}

\subsection{Extreme Value of Gaussian Random Variables}
Before providing proofs, consider the following lemma which shows the probabilistic bound of $Z_{(i)}$ when $Z^n$ is an i.i.d.\ standard normal random vector.
\begin{lemma}\label{lem:i-th order statistics}
Let $\epsilon>0$. If positive integers $n$ and $i$ satisfy $0\leq \frac{1}{n-i+1}\log\frac{n^{i-1}}{\epsilon} \leq 1$, then
\begin{align}
\Pr{Z_{(i)}<\Phi^{-1}\left(1-\frac{1}{n-i+1}\log\frac{n^{i-1}}{\epsilon}\right)} \leq \epsilon,
\end{align}
where $\Phi(x) = \int_{-\infty}^x \frac{1}{\sqrt{2\pi}}e^{-\frac{z^2}{2}}dz$ is a standard normal cumulative distribution function.
\end{lemma}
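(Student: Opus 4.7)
The plan is to reduce the order-statistic tail bound to a straightforward binomial-tail estimate via a union bound. Set $t = \Phi^{-1}\!\left(1 - \frac{1}{n-i+1}\log\frac{n^{i-1}}{\epsilon}\right)$, so that by construction $\Phi(t) = 1 - p$, where $p \stackrel{\Delta}{=} \frac{1}{n-i+1}\log\frac{n^{i-1}}{\epsilon}$. The hypothesis $0 \leq p \leq 1$ guarantees that $t$ is well-defined.

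The first key observation is that $Z_{(i)} < t$ if and only if fewer than $i$ of the $Z_j$'s are at least $t$, i.e., at least $n-i+1$ of them lie strictly below $t$. Hence
\begin{align}
\Pr{Z_{(i)} < t} = \Pr{\text{at least } n-i+1 \text{ of the } Z_j\text{'s are } < t} \leq \sum_{\substack{S \subseteq \{1,\ldots,n\}\\ |S|=n-i+1}} \Pr{Z_j < t \text{ for all } j \in S}
\end{align}
by a union bound over the $\binom{n}{n-i+1}$ choices of which coordinates lie below $t$. By independence each summand equals $\Phi(t)^{n-i+1} = (1-p)^{n-i+1}$, so
\begin{align}
\Pr{Z_{(i)} < t} \leq \binom{n}{n-i+1} (1-p)^{n-i+1} = \binom{n}{i-1}(1-p)^{n-i+1}.
\end{align}

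The remaining two steps are routine: use the crude estimate $\binom{n}{i-1} \leq n^{i-1}$, and the standard inequality $1-p \leq e^{-p}$ valid for $p \geq 0$, so that $(1-p)^{n-i+1} \leq e^{-p(n-i+1)}$. Plugging in the defining identity $p(n-i+1) = \log\frac{n^{i-1}}{\epsilon}$ yields $(1-p)^{n-i+1} \leq \epsilon/n^{i-1}$, and the two factors $n^{i-1}$ cancel to give the desired bound $\epsilon$.

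I do not anticipate a real obstacle here; the only subtlety is making sure the counting step is stated correctly ($Z_{(i)}$ is the $i$-th \emph{largest}, so the complementary event is ``$n-i+1$ values below $t$'', not ``$i$ values below $t$''), and that the hypothesis on $p$ is invoked to ensure $\Phi^{-1}$ is applied to a value in $[0,1]$ and that $1-p \in [0,1]$ so the exponential bound is tight enough to be useful.
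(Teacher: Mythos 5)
Your proof is correct. You reach the same intermediate bound as the paper, namely $\binom{n}{i-1}(1-p)^{n-i+1}$ with $p = \frac{1}{n-i+1}\log\frac{n^{i-1}}{\epsilon}$, and then finish identically with $\binom{n}{i-1}\le n^{i-1}$ and $1-p\le e^{-p}$. But the route to that intermediate bound is genuinely different. The paper's argument passes through the probability integral transform: it notes that $\Phi(Z_{(i)})$ is the $i$-th largest of $n$ i.i.d.\ Uniform$[0,1]$ variables, writes out its exact density $\frac{n!}{(n-i)!(i-1)!}x^{n-i}(1-x)^{i-1}$, integrates up to $1-p$, and bounds the $(1-x)^{i-1}$ factor by $1$ before integrating. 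Your argument is a union bound over which $n-i+1$ coordinates fall below the threshold $t$, requiring nothing beyond independence and counting. The two approaches trade off in the usual way: the paper's exact-density computation makes explicit exactly where the slack is introduced (the single inequality $(1-x)^{i-1}\le 1$) and would let one tighten or re-derive variants easily; your union bound is more elementary, avoids the need to recall the order-statistic density, and its logic generalizes immediately to non-Gaussian or even non-identical independent coordinates. Your care in translating ``$Z_{(i)}<t$'' into ``at least $n-i+1$ values below $t$'' (rather than the other direction) is exactly the point one must get right, and you did.
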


\begin{proof}
Since $\Phi(Z_1),\Phi(Z_2),\ldots,\Phi(Z_n)$ are i.i.d.\ uniform random variables, $\Phi(Z_{(i)})$ can be considered as the $i$-th largest value of an $n$ dimensional i.i.d.\ uniform random vector. The probability density function of $\Phi(Z_{(i)})$ is $\frac{n!}{(n-i)!(i-1)!} x^{n-i}(1-x)^{i-1}$. Therefore,
\begin{align}
\Pr{Z_{(i)}<\Phi^{-1}\left(1-\frac{1}{n-i+1}\log\frac{n^{i-1}}{\epsilon}\right)} = & \Pr{\Phi\left(Z_{(i)}\right)<1-\frac{1}{n-i+1}\log\frac{n^{i-1}}{\epsilon}}\\
=& \int_0^{1-\frac{1}{n-i+1}\log\frac{n^{i-1}}{\epsilon}}\frac{n!}{(n-i)!(i-1)!} x^{n-i}(1-x)^{i-1} dx\\
\leq& \int_0^{1-\frac{1}{n-i+1}\log\frac{n^{i-1}}{\epsilon}}\frac{n!}{(n-i)!(i-1)!} x^{n-i} dx\\
=& \frac{n!}{(n-i+1)!(i-1)!} \left(1-\frac{1}{n-i+1}\log\frac{n^{i-1}}{\epsilon}\right)^{n-i+1} \\
\leq&n^{i-1}\exp\left( -\log\frac{n^{i-1}}{\epsilon}\right) \\
=&\epsilon.
\end{align}
This concludes the proof.
\end{proof}

\subsection{Proof of Theorem \ref{thm:optimum zero rate}}\label{sec:Proof of optimum zero rate in excess distortion probability}
In the proof, we use $\alpha=\alpha_n$ for simplicity. By the definition of $\hat{X}^n$, we have
\begin{align}
\norm{X^n-\hat{X}^n}^2=\norm{X^n}^2+\frac{2k_n\alpha}{n-k_n}\sum_{i=1}^n X_i - \frac{2n\alpha}{n-k_n}\sum_{i=1}^{k_n} X_{(i)}+\frac{nk_n}{n-k_n}\alpha^2.
\end{align}
Let $\gamma_n$ and $\delta_n$ be positive real numbers where we specify their values later. Then,
\begin{align}
&\Pr{\norm{X^n-\hat{X}^n}^2>n(1+\gamma_n-\delta_n)}\nonumber\\
&=\Pr{\norm{X^n}^2+\frac{2k_n\alpha}{n-k_n}\sum_{i=1}^n X_i - \frac{2n\alpha}{n-k_n}\sum_{i=1}^{k_n} X_{(i)}+\frac{nk_n}{n-k_n}\alpha^2>n(1+\gamma_n-\delta_n)}\\
&\leq\Pr{\norm{X^n}^2 >n(1+\gamma_n)}+ \Pr{\frac{2k_n\alpha}{n-k_n}\sum_{i=1}^n X_i - \frac{2n\alpha}{n-k_n}\sum_{i=1}^{k_n} X_{(i)}+\frac{nk_n}{n-k_n}\alpha^2>-n\delta_n}.\label{eq:two terms of thm1 proof first step}
\end{align}
Consider the first term of \eqref{eq:two terms of thm1 proof first step}. Let $\gamma_n=\sqrt{\frac{2}{n}}Q^{-1}\left(\epsilon-\frac{15}{\sqrt{n}}-\frac{2}{n}\right)$, then we have
\begin{align}
\Pr{\norm{X^n}^2 >n(1+\gamma_n)}&\leq  Q\left(\sqrt{\frac{n}{2}}\gamma_n\right)+\frac{15}{\sqrt{n}}\label{eq:BerryEseen}\\
&=\epsilon-\frac{2}{n}.\label{eq:bound the first term of thm1 proof}
\end{align}
In \eqref{eq:BerryEseen}, we used Berry-Esseen theorem \cite{berry1941accuracy}:
\begin{align}
\sup_x \left|\Pr{\frac{\sum_{i=1}^n (X_i^2 - 1)}{\sigma\sqrt{n}} > x} - Q(x) \right| < \frac{\rho}{\sqrt{n}},
\end{align}
where $\sigma^2 = \E{X^4}-\E{X^2}^2 = 2$ and $\rho = \E{X_1^6} = 15$.

Consider the second term of \eqref{eq:two terms of thm1 proof first step}. 
\begin{align}
&\Pr{\frac{2k_n\alpha}{n-k_n}\sum_{i=1}^n X_i - \frac{2n\alpha}{n-k_n}\sum_{i=1}^{k_n} X_{(i)}+\frac{nk_n}{n-k_n}\alpha^2>-n\delta_n}\nonumber\\
&=\Pr{\frac{1}{n}\sum_{i=1}^n X_i - \frac{1}{k_n}\sum_{i=1}^{k_n} X_{(i)}+\frac{\alpha}{2}>-\frac{n-k_n}{2k_n\alpha}\delta_n}\\
&=\Pr{\frac{1}{k_n}\sum_{i=1}^{k_n} X_{(i)}-\frac{1}{n}\sum_{i=1}^n X_i <\frac{\alpha}{2}+\frac{n-k_n}{2k_n\alpha}\delta_n}.
\end{align}
Let $\alpha = \sqrt{\frac{n-k_n}{k_n}\delta_n}=p_n-q_n$, where
\begin{align}
p_n =& \Phi^{-1}\left(1-\frac{1}{n-k_n+1}\log n^{k_n}\right)\\
=& Q^{-1}\left(\frac{k_n}{n-k_n+1}\log n\right)\label{eq:p_n}\\
q_n =&\frac{1}{\sqrt{n}}Q^{-1}\left(\frac{1}{n}\right).\label{eq:q_n}
\end{align}
Then, we have
\begin{align}
&\Pr{\frac{2k_n\alpha}{n-k_n}\sum_{i=1}^n X_i - \frac{2n\alpha}{n-k_n}\sum_{i=1}^{k_n} X_{(i)}+\frac{nk_n}{n-k_n}\alpha^2>-n\delta_n}\nonumber\\
&=\Pr{\frac{1}{k_n}\sum_{i=1}^{k_n} X_{(i)}-\frac{1}{n}\sum_{i=1}^n X_i <p_n-q_n}\\
&\leq\Pr{\frac{1}{k_n}\sum_{i=1}^{k_n} X_{(i)} <p_n}+\Pr{\frac{1}{n}\sum_{i=1}^n X_i >q_n}\\
&\leq\Pr{X_{(k_n)} <p_n}+\Pr{\frac{1}{n}\sum_{i=1}^n X_i >q_n}.
\end{align}
By Lemma \ref{lem:i-th order statistics}, $\Pr{X_{(k_n)} <p_n}\leq\frac{1}{n}$. Since $\frac{1}{n}\sum_{i=1}^n X_i$ has a Gaussian distribution with zero mean and variance $\frac{1}{n}$,
\begin{align}
\Pr{\frac{1}{n}\sum_{i=1}^n X_i >q_n}=&\frac{1}{n}.
\end{align}
Therefore,
\begin{align}
\Pr{\frac{2k_n\alpha}{n-k_n}\sum_{i=1}^n X_i - \frac{2n\alpha}{n-k_n}\sum_{i=1}^{k_n} X_{(i)}+\frac{nk_n}{n-k_n}\alpha^2>-n\delta_n}\leq\frac{2}{n}.
\end{align}
With \eqref{eq:bound the first term of thm1 proof}, we have
\begin{align}
\Pr{\norm{X^n-\hat{X}^n}^2>n(1+\gamma_n-\delta_n)}\leq \epsilon.
\end{align}

Now, let consider the bound on $1+\gamma_n-\delta_n$. It is clear that the inequality $\frac{\sqrt{2\log \frac{n-k_n+1}{k_n\log^3 n}}}{1+2\log \frac{n-k_n+1}{k_n\log^3 n}} \frac{1}{\sqrt{2\pi}}  \log^2 n>1$ holds for large enough $n$, and therefore
\begin{align}
Q\left(\sqrt{2\log \frac{n-k_n+1}{k_n\log^3 n}}\right)\geq& \frac{\sqrt{2\log \frac{n-k_n+1}{k_n\log^3 n}}}{1+2\log \frac{n-k_n+1}{k_n\log^3 n}} \frac{1}{\sqrt{2\pi}} \frac{k_n\log^3 n}{n-k_n+1}\\
\geq &\frac{k_n\log n}{n-k_n+1},
\end{align}
which implies
\begin{align}
p_n = Q^{-1}\left(\frac{k_n}{n-k_n+1}\log n\right)\geq \sqrt{2\log \frac{n-k_n+1}{k_n\log^3 n}}.
\end{align}
On the other hand, it is not hard to show that
\begin{align}
q_n =\frac{1}{\sqrt{n}}Q^{-1}\left(\frac{1}{n}\right)\leq  \sqrt{\frac{2\log\frac{n}{2}}{n}}.
\end{align}

Now, we are ready to bound $D_n = 1+\gamma_n-\delta_n$. Since $R_n = \frac{1}{n}\log{n \choose k_n}$, we have
\begin{align}
D_n =& 1+\gamma_n - \frac{k_n}{n-k_n}(p_n-q_n)^2\\
=&1+\sqrt{\frac{2}{n}}Q^{-1}\left(\epsilon-\frac{15}{\sqrt{n}}-\frac{2}{n}\right)- \frac{k_n}{n-k_n}\left(Q^{-1}\left(\frac{k_n}{n-k_n+1}\log n\right)-\frac{1}{\sqrt{n}}Q^{-1}\left(\frac{1}{n}\right)\right)^2\\
\leq&1+ \sqrt{\frac{2}{n}}Q^{-1}(\epsilon) +O\left(\frac{1}{n}\right)- \frac{k_n}{n-k_n}\left(\sqrt{2\log \frac{n-k_n+1}{k_n\log^3 n}}-\sqrt{\frac{2\log\frac{n}{2}}{n}}\right)^2\\
=&1+ \sqrt{\frac{2}{n}}Q^{-1}(\epsilon) - 2R_n+O\left(\frac{k_n\log \log n}{n}\right).
\end{align}
This concludes the proof.

\subsection{Proof of Theorem \ref{thm:infinite refinability}}\label{sec: proof of infinite refinability}
Throughout the proof, we will let $\sigma^2=1$ and use $L$ instead of $L_n$ for simplicity. Also, instead of choosing specific orthogonal matrices $A_1,\ldots, A_{L+1}$, we employ a randomization argument. More precisely, we assume that $\Ab_{1},\Ab_{2},\ldots,\Ab_{i+1}$ are drawn i.i.d.\ $\mathUO$ and show that equation \eqref{eq:infinite refinability thm} holds when the probability is averaged over this ensemble of random matrices. Let $S_i = \norm{\Xb{i}}$ and $\Xb{i} = S_i\Bb{i}$ where $\Bb{i}$ is uniformly distributed on the $n$-dimensional unit sphere and independent to $S_i$. Since we draw random matrices independently, random variables $\Bb{1},\ldots,\Bb{L+1}$ are also independent. Recall \eqref{eq:after ith iteration} and \eqref{eq:xhat after ith iteration}, we have $\norm{X^n - \Xbh{i}}^2 =  \norm{\Xb{i+1}}^2 = S_{i+1}^2$, and this implies that the distortion after the $i$-th iteration coincides with $S_{i+1}^2$ divided by $n$. We further let $\tilde{S}$ be a chi-distributed random variable with degrees of freedom $n$ and independent to all $\Bb{i}$'s, i.e., $\tilde{S}^2 \sim \chi^2(n)$. Using union bound, we can obtain an upper bound on the excess distortion probability.
\begin{align}
&\Pr{\frac{1}{\sqrt{n}}S_{i+1}>e^{-\frac{i}{L}R} + e^{\frac{i}{L}R}\gamma_n\mbox{ for some $0\leq i\leq L$}} \nonumber\\
&\leq  \Pr{\frac{1}{\sqrt{n}}S_1>1+\gamma_n \mbox{ or }\frac{1}{\sqrt{n}}\tilde{S}>\sqrt{1+\gamma_n}}\nonumber\\
&+\sum_{i=1}^L \Pr{\frac{1}{\sqrt{n}}S_{i+1}>e^{-\frac{i}{L}R} + e^{\frac{i}{L}R}\gamma_n,   \frac{1}{\sqrt{n}}S_{j+1}\leq e^{-\frac{j}{L}R} + e^{\frac{j}{L}R}\gamma_n\mbox{ for all $j<i$}, \mbox{ and }\frac{1}{\sqrt{n}}\tilde{S}\leq \sqrt{1+\gamma_n}}\\
&\leq  \Pr{\frac{1}{\sqrt{n}}S_1>1+\gamma_n \mbox{ or }\frac{1}{\sqrt{n}}\tilde{S}>\sqrt{1+\gamma_n}}\nonumber\\
&+\sum_{i=1}^L \Pr{\frac{1}{\sqrt{n}}S_{i+1}>e^{-\frac{i}{L}R} + e^{\frac{i}{L}R}\gamma_n, \frac{1}{\sqrt{n}}S_{i}\leq e^{-\frac{i-1}{L}R} + e^{\frac{i-1}{L}R}\gamma_n, \mbox{ and }\frac{1}{\sqrt{n}}\tilde{S}\leq \sqrt{1+\gamma_n}}
\end{align}
From the definition of $\Xb{i+1}$, we have
\begin{align}
S_{i+1}^2=&\norm{\Xb{i+1}}^2\\
=& \norm{\Xb{i}-\alpha_i \Ub{i}}^2\\
=& \norm{\Xb{i}}^2 + \alpha_i^2-2\alpha_i\left(-\sqrt{\frac{k_n}{n(n-k_n)}} \mathbf{1} + \sqrt{\frac{n}{(n-k_n)k_n}}\mb{i}\right)^T \Xb{i},\label{eq:complicated recursion}
\end{align}
where $\left(\mb{i}\right)^T \Xb{i}$ is a sum of $k_n$ largest value of $\Xb{i}$. Let $V_i = \left(-\sqrt{\frac{k_n}{n(n-k_n)}} \mathbf{1} + \sqrt{\frac{n}{(n-k_n)k_n}}\mb{i}\right)^T \Bb{i}$, then $V_i$ and $S_i$ are independent. We can now rewrite \eqref{eq:complicated recursion} as
\begin{align}
S_{i+1}^2 = S_i^2 + \alpha_i^2 - 2\alpha_i S_i V_i.
\end{align}
It is not hard to show that $S_i^2+\alpha_i^2 - 2\alpha_iS_iV_i$ is an increasing function in $S_i$ when $\frac{1}{\sqrt{n}}S_{i+1}>e^{-\frac{i}{L}R} + e^{\frac{i}{L}R}\gamma_n$ and $\frac{1}{\sqrt{n}}S_i\leq e^{-\frac{i-1}{L}R} + e^{\frac{i-1}{L}R}\gamma_n$. Therefore,
\begin{align}
S_{i+1}^2=S_i^2 + \alpha_i^2 - 2\alpha_i S_i V_i\leq  n\left(e^{-\frac{i-1}{L}R} + e^{\frac{i-1}{L}R}\gamma_n\right)^2+\alpha_i^2 -2\sqrt{n}\alpha_i\left(e^{-\frac{i-1}{L}R} + e^{\frac{i-1}{L}R}\gamma_n\right)V_i,
\end{align}
which is equivalent to
\begin{align}
 \frac{n\left(e^{-\frac{i-1}{L}R} + e^{\frac{i-1}{L}R}\gamma_n\right)^2+\alpha_i^2 - n\left(e^{-\frac{i}{L}R} + e^{\frac{i}{L}R}\gamma_n\right)^2}{2\sqrt{n}\alpha_i\left(e^{-\frac{i-1}{L}R} + e^{\frac{i-1}{L}R}\gamma_n\right)} >V_i.
\end{align}
This implies
\begin{align}
&\Pr{\frac{1}{\sqrt{n}}S_{i+1}>e^{-\frac{i}{L}R} + e^{\frac{i}{L}R}\gamma_n, \frac{1}{\sqrt{n}}S_{i}\leq e^{-\frac{i-1}{L}R} + e^{\frac{i-1}{L}R}\gamma_n, \mbox{ and }\frac{1}{\sqrt{n}}\tilde{S}\leq \sqrt{1+\gamma_n}}\nonumber\\
&= \Pr{ \frac{n\left(e^{-\frac{i-1}{L}R} + e^{\frac{i-1}{L}R}\gamma_n\right)^2+\alpha_i^2 - n\left(e^{-\frac{i}{L}R} + e^{\frac{i}{L}R}\gamma_n\right)^2}{2\sqrt{n}\alpha_i\left(e^{-\frac{i-1}{L}R} + e^{\frac{i-1}{L}R}\gamma_n\right)} >V_i \mbox{ and }\frac{1}{\sqrt{n}}\tilde{S}\leq \sqrt{1+\gamma_n}}.
\end{align}
Recall that we took 
\begin{align}
\alpha_i = \sqrt{n\left(1-e^{-\frac{2}{L}R}\right)\left(e^{-\frac{i-1}{L}R}+e^{\frac{i-1}{L}R}\gamma_n\right)\left(e^{-\frac{i-1}{L}R}-e^{\frac{i-1}{L}R}\gamma_n\right)},
\end{align}
and it can be easily shown that
\begin{align}
\frac{n\left(e^{-\frac{i-1}{L}R} + e^{\frac{i-1}{L}R}\gamma_n\right)^2+\alpha_i^2 - n\left(e^{-\frac{i}{L}R} + e^{\frac{i}{L}R}\gamma_n\right)^2}{2\sqrt{n}\alpha_i\left(e^{-\frac{i-1}{L}R} + e^{\frac{i-1}{L}R}\gamma_n\right)}&=\frac{2\alpha_i^2}{2\sqrt{n}\alpha_i\left(e^{-\frac{i-1}{L}R} + e^{\frac{i-1}{L}R}\gamma_n\right)}\\
&\leq\sqrt{\left(1-e^{-\frac{2}{L}R}\right)\frac{e^{-\frac{i-1}{L}R}-e^{\frac{i-1}{L}R}\gamma_n}{e^{-\frac{i-1}{L}R}+e^{\frac{i-1}{L}R}\gamma_n}}\\
&\leq\sqrt{\left(1-e^{-\frac{2}{L}R}\right)\frac{1-\gamma_n}{1+\gamma_n}}.
\end{align}
Thus, we have
\begin{align}
&\Pr{\frac{1}{\sqrt{n}}S_{i+1}>e^{-\frac{i}{L}R} + e^{\frac{i}{L}R}\gamma_n,\mbox{ and }  \frac{1}{\sqrt{n}}S_{i}\leq e^{-\frac{i-1}{L}R} + e^{\frac{i-1}{L}R}\gamma_n, \mbox{ and }\frac{1}{\sqrt{n}}\tilde{S}\leq \sqrt{1+\gamma_n}}\nonumber\\
&\leq \Pr{\sqrt{\left(1-e^{-\frac{2}{L}R}\right)\frac{1-\gamma_n}{1+\gamma_n}} > V_i \mbox{ and } \frac{1}{\sqrt{n}}\tilde{S}\leq \sqrt{1+\gamma_n}}\\
&\leq \Pr{\sqrt{n\left(1-e^{-\frac{2}{L}R}\right)(1-\gamma_n)} > \tilde{S}V_i}\\
&\leq \Pr{\sqrt{\frac{2nR}{L}(1-\gamma_n)} > \tilde{S}V_i }.
\end{align}

Since $\Bb{i}$ is uniformly distributed on a unit sphere and it is independent of $\tilde{S}$, we have $\tilde{S}\Bb{i} \stackrel{(d)}{=} {\bf Z}$ where ${\bf Z}$ is an $n$ dimensional i.i.d.\ standard normal random vector. Furthermore,
\begin{align}
\tilde{S}V_i \stackrel{(d)}{\equiv}&\left(-\sqrt{\frac{k_n}{n(n-k_n)}} \mathbf{1} + \sqrt{\frac{n}{(n-k_n)k_n}}\mb{1}\right)^T {\bf Z}\\
=& \sqrt{\frac{nk_n}{n-k_n}}\left(\frac{1}{k_n}\sum_{i=1}^{k_n}Z_{(i)} - \frac{1}{n} \sum_{i=1}^n Z_i \right).
\end{align}

If we have $\gamma_n \geq 1- (p_n-q_n)^2 \frac{L}{2nR}\frac{nk_n}{n-k_n}$, where $p_n$ and $q_n$ were defined as \eqref{eq:p_n} and \eqref{eq:q_n}, then we can apply the similar technique from the proof of Theorem \ref{thm:optimum zero rate}. I.e., 
\begin{align}
&\Pr{\frac{1}{\sqrt{n}}S_{i+1}>e^{-\frac{i}{L}R} + e^{\frac{i}{L}R}\gamma_n,\mbox{ and }  \frac{1}{\sqrt{n}}S_{i}\leq e^{-\frac{i-1}{L}R} + e^{\frac{i-1}{L}R}\gamma_n, \mbox{ and }\frac{1}{\sqrt{n}}\tilde{S}\leq \sqrt{1+\gamma_n}}\nonumber\\
&\leq \Pr{\sqrt{\frac{2nR}{L}(1-\gamma_n)} > \tilde{S}V_i }\\
&\leq\Pr{p_n-q_n >\frac{1}{k_n}\sum_{i=1}^{k_n}Z_{(i)} - \frac{1}{n} \sum_{i=1}^n Z_i }\\
&\leq \frac{2}{n}.
\end{align}
Recall that $p_n\geq \sqrt{2\log\frac{n-k_n+1}{k_n\log^3 n}}$ and $q_n \leq \sqrt{\frac{2\log\frac{n}{2}}{n}}$. Therefore, it is easy to check that
\begin{align}
1- (p_n-q_n)^2 \frac{L}{2nR}\frac{nk_n}{n-k_n}&\leq 1- \left(\sqrt{2\log\frac{n-k_n+1}{k_n\log^3 n}}-\sqrt{\frac{2\log\frac{n}{2}}{n}}\right)^2 \frac{k_n}{2\log{n\choose k}} \\
&= O\left(\frac{\log\log n}{\log n}\right).
\end{align} 

Firstly, if $\gamma_n$ is equal to any constant $\gamma>0$, due to the stationarity of the source, we have 
\begin{align}
\lim_{n\rightarrow\infty}\Pr{\frac{1}{\sqrt{n}}S_1>1+\gamma_n \mbox{ or }\frac{1}{\sqrt{n}}\tilde{S}>\sqrt{1+\gamma_n}} = 0.
\end{align}
Therefore,
\begin{align}
&\lim_{n\rightarrow\infty}\Pr{\frac{1}{\sqrt{n}}S_{i+1}>e^{-\frac{i}{L}R} + e^{\frac{i}{L}R}\gamma_n\mbox{ for some $0\leq i\leq L$}}\nonumber\\
&=\lim_{n\rightarrow\infty}\Pr{\frac{1}{\sqrt{n}}S_1>1+\gamma_n \mbox{ or }\frac{1}{\sqrt{n}}\tilde{S}>\sqrt{1+\gamma_n}} +\lim_{n\rightarrow\infty}\frac{2}{n}L_n\\
&=0.
\end{align}

Suppose the source is i.i.d.\ distributed with $\E{|X_1|^3}<\infty$, then we can let $\gamma_n = O\left(\frac{\log\log n}{\log n}\right)$ such that
\begin{align}
\gamma_n \geq 1- \left(\sqrt{2\log\frac{n-k_n+1}{k_n\log^3 n}}-\sqrt{\frac{2\log\frac{n}{2}}{n}}\right)^2 \frac{k_n}{2\log{n\choose k}}\label{eq:condition on gamma}
\end{align}
and still have
\begin{align}
\lim_{n\rightarrow\infty}\Pr{\frac{1}{\sqrt{n}}S_1>1+\gamma_n \mbox{ or }\frac{1}{\sqrt{n}}\tilde{S}>\sqrt{1+\gamma_n}} = 0.
\end{align}
We would like to point out that the right hand side of \eqref{eq:condition on gamma} is independent to the choice of $R$. Finally, it is clear that
\begin{align}
&\lim_{n\rightarrow\infty}\Pr{\frac{1}{\sqrt{n}}S_{i+1}>e^{-\frac{i}{L}R} + e^{\frac{i}{L}R}\gamma_n\mbox{ for some $0\leq i\leq L$}}\nonumber\\
&=\lim_{n\rightarrow\infty}\Pr{\frac{1}{\sqrt{n}}S_1>1+\gamma_n \mbox{ or }\frac{1}{\sqrt{n}}\tilde{S}>\sqrt{1+\gamma_n}} +\lim_{n\rightarrow\infty}\frac{2}{n}L_n\\
&=0.
\end{align}
This concludes the proof.

\section{Conclusions}\label{sec:Conclusions}
Our starting point (and inspiration for the subsequent main scheme and result) was an extremely simple scheme that achieves the optimum zero-rate distortion for the Gaussian source. We then generalized it to CROM, a lossy source coding scheme that simultaneously achieves the distortion-rate function of the Gaussian memoryless source for all rates when operating on any ergodic source.
The merit of CROM over classical random coding schemes is its low storage and computational complexity, as well as the fact that the encoding can be oblivious to the rate desired while the decoding is essentially sequential (sub-linear lookahead) and simultaneously achieves all points on the distortion-rate curve.

%


\bibliographystyle{IEEEtran}
\bibliography{../../AlbertRef}
\end{document}